\documentclass{article}
\usepackage[colorlinks,citecolor=blue,bookmarks=true]{hyperref}
\usepackage{amsmath} 
\usepackage{amsthm} 
\usepackage{amssymb}	
\usepackage{graphicx} 
\usepackage{multirow}
\usepackage{color}
\usepackage[dvips,letterpaper,margin=1in,bottom=1in]{geometry}
\usepackage[capitalize,noabbrev]{cleveref}
\usepackage{subcaption}

\usepackage[utf8]{inputenc}
\usepackage[english]{babel}

\usepackage{algorithm}
\usepackage{algpseudocode}
\makeatletter
\renewcommand{\ALG@name}{Protocol}
\makeatother

\usepackage{bm}
\usepackage{bbm}
\usepackage{diagbox}
\usepackage{mathtools}
\usepackage{ytableau}



\newtheorem{theorem}{Theorem}[section]

\newtheorem{lemma}[theorem]{Lemma}
\newtheorem{corollary}[theorem]{Corollary}

\newtheorem{fact}[theorem]{Fact}

\newtheorem{definition}[theorem]{Definition}

\newtheorem{problem}[theorem]{Problem}

\crefname{question}{Question}{Questions}

\newcommand{\braket}[2]{\left< #1 \vphantom{#2} \middle| #2 \vphantom{#1} \right>} 

\DeclarePairedDelimiter\ket{\lvert}{\rangle}
\DeclarePairedDelimiter\bra{\langle}{\rvert}

\newcommand{\E}{\mathop{\bf E\/}}

\newcommand{\polylog} {\operatorname{polylog}}

\newcommand{\Sh} {\textup{Sh}}

\newcommand{\ketbra}[2]{\ensuremath{\ket{#1}\!\bra{#2}}}
\renewcommand{\braket}[2]{\ensuremath{\langle {#1} \vert {#2} \rangle}}

\allowdisplaybreaks

\title{Inverse-free quantum state estimation with Heisenberg scaling}
\date{}

\author{Kean Chen\thanks{University of Pennsylvania, Philadelphia, USA. Email: \texttt{keanchen.gan@gmail.com}}}

\begin{document}

\maketitle

\begin{abstract}
In this paper, we present an inverse-free pure quantum state estimation protocol that achieves Heisenberg scaling. Specifically, let $\mathcal{H}\cong \mathbb{C}^d$ be a $d$-dimensional Hilbert space with an orthonormal basis $\{\ket{1},\ldots,\ket{d}\}$ and $U$ be an unknown unitary on $\mathcal{H}$. Our protocol estimates $U\ket{d}$ to within trace distance error $\varepsilon$ using $O(\min\{d^{3/2}/\varepsilon,d/\varepsilon^2\})$ forward queries to $U$. This complements the previous result $O(d\log(d)/\varepsilon)$ by 
\hyperlink{cite.van2023quantum}{\textcolor{blue}{van Apeldoorn, Cornelissen, Gily\'en, and Nannicini (SODA 2023)}}, 
which requires both forward and inverse queries. Moreover, our result implies a query upper bound $O(\min\{d^{3/2}/\varepsilon,1/\varepsilon^2\})$ for inverse-free amplitude estimation, improving the previous best upper bound $O(\min\{d^{2}/\varepsilon,1/\varepsilon^2\})$ based on optimal unitary estimation by \hyperlink{cite.haah2023query}{\textcolor{blue}{Haah, Kothari, O'Donnell, and Tang (FOCS 2023)}}, and disproving a conjecture posed in \hyperlink{cite.tang2025amplitude}{\textcolor{blue}{Tang and Wright (2025)}}. 
\end{abstract}

\section{Introduction}

Given access to an unknown $d$-dimensional quantum state $\ket{\psi}$, how efficiently can we estimate $\ket{\psi}$ to within trace distance error $\varepsilon$?
This state estimation problem has been extensively studied in quantum information and quantum learning theory, and can be categorized into two settings.
In the sample-access setting, one assumes access to multiple copies of the unknown state $\ket{\psi}$ and the sample complexity has been shown to be $\Theta(d/\varepsilon^2)$~\cite{bruss1999optimal,keyl1999optimal,werner1998optimal} (see also \cite{keyl2006quantum,haah2016sample,o2016efficient,o2017efficient,guctua2020fast,yuen2023improved,scharnhorst2025optimal,pelecanos2025debiased} for the mixed state case).
In contrast, the query-access setting assumes access to a state-preparation unitary $U$ of $\ket{\psi}$, as well as to its inverse $U^{-1}$. In this model, the query complexity is $\widetilde{\Theta}(d/\varepsilon)$\footnote{$\widetilde{\Theta}$ hides logarithmic factors.}~\cite{van2023quantum}.
The key distinction between these two models lies in their error scaling: the query model achieves the Heisenberg scaling $\Theta(1/\varepsilon)$, whereas the sample model is limited to the classical scaling $\Theta(1/\varepsilon^2)$.

Recently, Tang and Wright~\cite{tang2025amplitude} established that, in the absence of inverse query access, quantum amplitude estimation requires a query complexity of $\Omega(\min\{d, 1/\varepsilon^2\})$. This stands in sharp contrast to the well-known upper bound of $O(1/\varepsilon)$~\cite{brassard2000quantum}, which assumes access to both forward and inverse queries. It therefore reveals that disallowing inverse access incurs an additional $\Omega(d)$ overhead in order to go beyond the classical scaling $\Theta(1/\varepsilon^2)$. 

Conversely, for the unitary estimation task, Haah, Kothari, O'Donnell, and Tang~\cite{haah2023query} showed that inverse access is not necessary to achieve the optimal query complexity $\Theta(d^2/\varepsilon)$ (see also the work by Yang, Renner, and Chiribella~\cite{yang2020optimal} for unitary estimation under an average-case distance).
Based on this, \cite{tang2025amplitude} conjectured that the query complexity of inverse-free amplitude estimation is $\Theta(\min\{d^2/\varepsilon,1/\varepsilon^2\})$, suggesting that achieving inverse-free Heisenberg scaling requires estimating the entire unitary.
The intuition behind their conjecture is a two-regime characterization of the inverse-free setting:
1) \textit{estimating all features of a unitary} can be done coherently with a $O(1/\varepsilon)$ scaling;
2) \textit{estimating one desired feature of the unitary} cannot be done better than the naive incoherent $O(1/\varepsilon^2)$ algorithm.

\subsection{Our results}
In this paper, we study the quantum state estimation problem in the query access model.
We present an inverse-free state estimation protocol with query complexity $O(\min\{d^{3/2}/\varepsilon,d/\varepsilon^2\})$. 
This further implies a query upper bound $O(\min\{d^{3/2}/\varepsilon,1/\varepsilon^2\})$ for inverse-free amplitude estimation by combining it with the naive $O(1/\varepsilon^2)$ amplitude estimation. 
Our results demonstrate that, for both state and amplitude estimation, Heisenberg scaling can be achieved without incurring a $\Theta(d^2)$ overhead in the inverse-free setting, and thereby disproving the conjecture proposed in \cite{tang2025amplitude} (see \cref{sec-10271551} for further discussions).
We formalize the problem and our main results as follows.

\begin{problem}[State estimation]
Let $\mathcal{H}\cong \mathbb{C}^d$ be a $d$-dimensional Hilbert space with an orthonormal basis $\{\ket{1},\ket{2},\ldots,\ket{d}\}$.
Suppose $U$ is an unknown unitary on $\mathcal{H}$.
The goal is to estimate the state $U\ket{d}$ to within trace distance error $\varepsilon$ using only forward queries to $U$.
\end{problem}

\begin{theorem}\label{thm-8100202}
    There is a quantum protocol that uses $n$ forward queries to $U$ and outputs a classical description of a unit vector $\ket{\psi}$. The output $\ket{\psi}$ is random and satisfies
\begin{equation}\label{eq-10201105}
1-\E_{\ket{\psi}}\!\left[|\bra{\psi}U\ket{d}|^2\right]\leq O\!\left(\frac{d^3}{n(n+d^2)}\right).
\end{equation}
\end{theorem}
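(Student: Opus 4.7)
The target bound $O(d^3/(n(n+d^2)))$ interpolates between a classical rate $O(d/n)$ (for $n\lesssim d^2$) and a Heisenberg-type rate $O(d^3/n^2)$ (for $n\gtrsim d^2$). My plan is to exhibit a single parallel protocol achieving both regimes and to analyze its Haar-averaged fidelity via Schur--Weyl duality.

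The protocol I envision is a parallel inverse-free strategy: choose an input state $\ket{\Psi_0}\in(\mathbb{C}^d)^{\otimes n}$ (possibly with ancillae), apply $U^{\otimes n}$ in one shot using the $n$ queries, and read out an estimate via a unitarily covariant POVM $\{\Pi_{\ket{\psi}}\}_{\ket{\psi}\in\mathbb{CP}^{d-1}}$. Two corner cases are informative but both suboptimal: the symmetric choice $\ket{\Psi_0}=\ket{d}^{\otimes n}$ reduces to the Keyl scheme on $n$ copies of $U\ket{d}$ and attains only the classical rate $O(d/n)$; projecting the inverse-free unitary tomography of Haah--Kothari--O'Donnell--Tang onto $\ket{d}$ attains only $O(d^4/n^2)$, which beats the classical rate only when $n\gg d^3$. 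To bridge the gap I would use an input of the form $\ket{\Psi_0}=\sum_\lambda c_\lambda\,\ket{v_\lambda}$, where $\ket{v_\lambda}$ is a canonical ``$\ket{d}$-weight'' vector in the irreducible $U(d)$-isotypic component $\calU_\lambda\subset(\mathbb{C}^d)^{\otimes n}$ indexed by a Young diagram $\lambda\vdash n$ of height at most $d$, with the amplitudes $c_\lambda$ optimized.

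For the analysis, Schur--Weyl duality $(\mathbb{C}^d)^{\otimes n}\cong\bigoplus_\lambda\calU_\lambda\otimes\calS_\lambda$ makes $U^{\otimes n}=\bigoplus_\lambda\pi_\lambda(U)\otimes I_{\calS_\lambda}$ block-diagonal, so the POVM measurement on $U^{\otimes n}\ket{\Psi_0}$ decomposes across $\lambda$. A worst-case-to-average-case reduction---obtained by first symmetrizing the protocol with a Haar-random unitary that stabilizes $\ket{d}$---lets me replace the worst case by $\E_{U\sim\text{Haar}}|\bra{\psi}U\ket{d}|^2$. The Weingarten calculus (equivalently, Schur orthogonality of matrix coefficients on $U(d)$) then evaluates this as an explicit sum over $\lambda$ of quantities built from $\dim\calU_\lambda$, $\dim\calS_\lambda$, and $|c_\lambda|^2$. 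Optimizing $\{c_\lambda\}$ and bounding the sum should produce the stated rate, with the crossover at $n\sim d^2$ emerging because the one-row diagram dominates when $n\ll d^2$ while many diagrams of height $\Theta(d)$ contribute comparably when $n\gg d^2$.

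The main obstacle is the design of $\ket{\Psi_0}$: the optimal input must be supported on an effective subspace of dimension $\Theta(d^{3/2})$, which is rich enough to exceed the classical $O(d/n)$ ceiling but focused enough to avoid the $\Theta(d^2)$ overhead of learning the entire unitary. A secondary subtlety is the worst-to-average reduction itself, since the protocol is only covariant under the stabilizer of $\ket{d}$ rather than under all of $U(d)$, so the symmetrization step requires care.
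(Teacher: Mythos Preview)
Your high-level framework---a parallel probe $U^{\otimes n}\ket{\Psi_0}$ analyzed through Schur--Weyl duality and read out by a covariant measurement (in the paper, the pretty good measurement over the Haar ensemble)---matches the paper's approach. But the proposal stops precisely at the step that carries all the content, namely the design of $\ket{\Psi_0}$, and your guesses about what that design should look like point in the wrong direction.

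The paper does \emph{not} use a broad superposition over general $\lambda\vdash n$ with generic ``$\ket{d}$-weight'' vectors. Instead it restricts to a one-parameter family of $L+1$ shapes $\gamma_0,\ldots,\gamma_L$ (with $L=n/(2d)$), each a ``$\Gamma$-shaped'' diagram: first row of length $N-i$, rows $2$ through $d-1$ of length $L$, and $d$-th row of length $i$, where $N=(d+1)L$. Every shape has exactly $d$ rows; the one-row diagram never appears, so your picture of the crossover as ``one-row dominates for small $n$, tall diagrams for large $n$'' is not how the bound arises. Within each $\mathcal{Q}^d_{\gamma_i}$ the probe vector is the specific Gelfand--Tsetlin basis vector $\ket{\Gamma_i}$ whose $\mathbb{U}_{d-1}$-branching lands in the one-dimensional $\det(\cdot)^L$ representation. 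This is the key structural idea: it forces $\texttt{q}_{\gamma_i}(U)\ket{\Gamma_i}$ to depend on $U$ only through $U\ket{d}$ (up to a global phase), so the probe is genuinely insensitive to the other $d-1$ columns of $U$. A vague ``$\ket{d}$-weight'' vector does not have this property, and without it your fidelity expression would involve all of $U$ and the optimization over $c_\lambda$ would not close.

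The second ingredient you are missing is what makes the optimization tractable: the Clebsch--Gordan decomposition $\ket{\Gamma_i}\otimes\ket{d}=\sqrt{\alpha_i}\,\ket{\Gamma_i^+}+\sqrt{\beta_i}\,\ket{\Gamma_{i+1}^+}$ couples only nearest neighbors in the index $i$. The expected fidelity then becomes a tridiagonal quadratic form in the coefficients $f_i$, and the paper solves it exactly with an explicit ansatz, obtaining the closed form $(d-1)/\bigl(L+N+d+\tfrac{2}{d+1}NL\bigr)=O(d^3/(n(n+d^2)))$. Your proposed route through Weingarten calculus and a generic optimization over all $\lambda$ would not produce this tridiagonal structure.

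Two smaller corrections. First, the number of irreps supporting the probe is $L+1=n/(2d)+1$, which grows with $n$; it is not a fixed $\Theta(d^{3/2})$. Second, no worst-to-average reduction is needed: because the PGM is Haar-covariant, the expected fidelity $\E_{\widehat U}\bigl[|\bra{d}\widehat U^\dagger U\ket{d}|^2\bigr]$ is automatically independent of $U$, so your ``secondary subtlety'' does not arise.
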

By \cref{thm-8100202}, we can easily obtain the following corollaries.
\begin{corollary}\label{corollary-10201505}
The inverse-free state estimation problem can be solved using $O(\min\{d^{3/2}/\varepsilon,d/\varepsilon^2\})$ queries.
\end{corollary}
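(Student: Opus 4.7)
The plan is to reduce the corollary directly to \cref{thm-8100202} through two routine steps: translating the expected-infidelity bound into a bound on the trace distance, and then solving for the smallest $n$ achieving error $\varepsilon$ in the two relevant parameter regimes.

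First, I would apply the standard pure-state identity relating trace distance and fidelity, namely
\[
\tfrac{1}{2}\norm{\ketbra{\psi}{\psi} - \ketbra{\phi}{\phi}}_1 = \sqrt{1-\abs{\braket{\psi}{\phi}}^2},
\]
together with Jensen's inequality applied to $\sqrt{\cdot}$ (or equivalently Markov applied to the nonnegative random variable $1-\abs{\bra{\psi}U\ket{d}}^2$), to convert \eqref{eq-10201105} into
\[
\E_{\ket{\psi}}\!\bracks*{\tfrac{1}{2}\norm{\ketbra{\psi}{\psi}-U\ketbra{d}{d}U^\dagger}_1} \leq O\!\left(\sqrt{\frac{d^3}{n(n+d^2)}}\right).
\]
A constant-probability guarantee of trace-distance error at most $\varepsilon$ then follows by a further Markov step at the cost of a constant factor, which does not affect the big-$O$ query count.

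Next, I would invert the bound. The quantity $d^3/(n(n+d^2))$ has two regimes: when $n\leq d^2$ it is $\Theta(d/n)$, and when $n\geq d^2$ it is $\Theta(d^3/n^2)$. Demanding that the square root of each be at most $\varepsilon$ gives the thresholds $n = \Theta(d/\varepsilon^2)$ (self-consistent when $\varepsilon \geq 1/\sqrt{d}$) and $n = \Theta(d^{3/2}/\varepsilon)$ (self-consistent when $\varepsilon \leq 1/\sqrt{d}$), respectively. Taking the smaller of the two achievable query counts for each $\varepsilon$ yields the desired bound $O(\min\{d^{3/2}/\varepsilon,\,d/\varepsilon^2\})$.

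There is no real obstacle here: given \cref{thm-8100202}, the corollary is a routine algebraic manipulation. The only minor pitfall is to check that the two regimes are combined correctly \textemdash{} in particular that $d/\varepsilon^2$ and $d^{3/2}/\varepsilon$ meet at the crossover $\varepsilon = 1/\sqrt{d}$ (both equal $d^2$, which is the value of $n$ at which the two asymptotic expressions for the infidelity agree) and that each regime's choice of $n$ lies within its range of validity \textemdash{} but this check is immediate.
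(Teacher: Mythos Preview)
Your proposal is correct and follows essentially the same route as the paper: the paper applies Markov directly to the infidelity $1-\abs{\bra{\psi}U\ket{d}}^2$ and then converts to trace distance via the identity $\norm{\ketbra{\psi}{\psi}-U\ketbra{d}{d}U^\dagger}_1=2\sqrt{1-\abs{\bra{\psi}U\ket{d}}^2}$, whereas you apply Jensen first and then Markov, but the substance is identical. Your two-regime analysis inverting the bound is exactly the verification implicit in the paper's one-line choice $n=O(\min\{d^{3/2}/\varepsilon,d/\varepsilon^2\})$.
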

\begin{proof}
Using \cref{thm-8100202} and letting $n=O(\min\{d^{3/2}/\varepsilon,d/\varepsilon^2\})$, we can obtain a random unit vector $\ket{\psi}$ such that 
$1-\E_{\ket{\psi}}\!\left[|\bra{\psi}U\ket{d}|^2\right]\leq \varepsilon^2/100$, which means $\Pr[1-|\bra{\psi} U\ket{d}|^2\leq \varepsilon^2/10 ]\geq 2/3$.
Note that $\|\ketbra{\psi}{\psi}-U\ketbra{d}{d}U^\dag\|_1=2\sqrt{1-|\bra{\psi}U\ket{d}|^2}$ where $\|\cdot\|_1$ denotes the trace norm. Thus the trace distance error is at most $\varepsilon$ with probability at least $2/3$.
\end{proof}
Then, we introduce our result for amplitude estimation.
\begin{problem}[Amplitude estimation]
Let $\mathcal{H}\cong \mathbb{C}^d$ be a $d$-dimensional Hilbert space with an orthonormal basis $\{\ket{1},\ket{2},\ldots,\ket{d}\}$.
Suppose $U$ is an unknown unitary on $\mathcal{H}$ and $\Pi$ is an orthogonal projector onto a subspace of $\mathcal{H}$.
The goal is to estimate the amplitude of $U\ket{d}$ on $\Pi$, i.e. $\sqrt{\bra{d} U^\dag \Pi U\ket{d}}$, to within error $\varepsilon$ using only forward queries to $U$.
\end{problem}
\begin{corollary}
The inverse-free amplitude estimation problem can be solved using $O(\min\{d^{3/2}/\varepsilon,1/\varepsilon^2\})$ queries.
\end{corollary}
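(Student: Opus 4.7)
The plan is to combine two complementary strategies and take whichever is cheaper for the given $\varepsilon$: a state-reconstruction strategy built on \cref{corollary-10201505}, costing $O(\min\{d^{3/2}/\varepsilon, d/\varepsilon^2\})$ forward queries, and a naive sampling strategy, costing $O(1/\varepsilon^2)$ forward queries. Since $\min\{d^{3/2}/\varepsilon, d/\varepsilon^2, 1/\varepsilon^2\} = \min\{d^{3/2}/\varepsilon, 1/\varepsilon^2\}$, this yields the claimed bound.

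For the state-estimation strategy, I would invoke \cref{corollary-10201505} with target trace distance $\varepsilon/2$ to obtain a classical description of a unit vector $\ket{\psi}$ satisfying $\tfrac{1}{2}\|\ketbra{\psi}{\psi} - U\ketbra{d}{d}U^\dag\|_1 \leq \varepsilon/2$, and then output $\hat a = \sqrt{\bra{\psi}\Pi\ket{\psi}}$ (which requires no further queries to $U$). The key step is that a pure-state trace-distance bound translates directly into an amplitude bound. Decomposing $\ket{\psi} = \cos\theta\, U\ket{d} + \sin\theta\, \ket{\phi^\perp}$ for a unit vector $\ket{\phi^\perp}$ orthogonal to $U\ket{d}$, the pure-state identity gives trace distance $= |\sin\theta| \leq \varepsilon/2$. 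A reverse triangle inequality applied to $\hat a = \|\cos\theta\, \Pi U\ket{d} + \sin\theta\, \Pi\ket{\phi^\perp}\|$ and $a = \|\Pi U\ket{d}\|$ then yields $|\hat a - a| \leq (1 - \cos\theta)\cdot a + |\sin\theta|\cdot\|\Pi\ket{\phi^\perp}\| \leq (1-\cos\theta) + |\sin\theta| \leq 2|\sin\theta| \leq \varepsilon$.

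For the naive strategy, I would prepare $U\ket{d}$ and perform the measurement $\{\Pi, \Id - \Pi\}$, producing Bernoulli samples of bias $p = a^2$. A standard Chernoff/Bernstein argument then shows that $n = O(1/\varepsilon^2)$ samples suffice to produce $\hat a = \sqrt{\hat p}$ with $|\hat a - a| \leq \varepsilon$ with high probability: when $p \geq \varepsilon^2$, the empirical fluctuation $|\hat p - p| = O(\sqrt{p/n})$ gives $|\sqrt{\hat p} - \sqrt{p}| \leq |\hat p - p|/\sqrt{p} = O(\varepsilon)$; when $p < \varepsilon^2$, both $a$ and $\sqrt{\hat p}$ stay $O(\varepsilon)$.

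I do not foresee any substantive obstacle; the proof reduces to \cref{corollary-10201505} plus the two short calculations above (trace distance to amplitude for pure states, and Bernoulli variance for estimating $\sqrt{p}$), both of which are standard. The only care needed is to amplify the success probability of each strategy to, say, $2/3$ by a constant number of independent repetitions, which does not affect the asymptotic query count.
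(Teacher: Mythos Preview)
Your proposal is correct and follows essentially the same approach as the paper: combine \cref{corollary-10201505} with the naive $O(1/\varepsilon^2)$ sampling estimator and take the minimum. Your trace-distance-to-amplitude calculation is exactly the content of the paper's \cref{lemma-10201504} (proved there via the same reverse triangle inequality on $\|\Pi\ket{\psi}\|$), so the two arguments coincide up to constants.
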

\begin{proof}
By \cref{lemma-10201504}, any state estimation protocol with trace distance error $\varepsilon$ is also an amplitude estimation protocol with error $O(\varepsilon)$.
Therefore, \cref{corollary-10201505} combined with the naive $O(1/\varepsilon^2)$ amplitude estimation can imply an $O(\min\{d^{3/2}/\varepsilon,1/\varepsilon^2\})$ query upper bound for the inverse-free amplitude estimation.
\end{proof}

\subsection{Related work}
\paragraph{State estimation with Heisenberg scaling.}
Van Apeldoorn, Cornelissen, Gily\'en and Nannicini~\cite{van2023quantum} developed a family of quantum state estimation protocols based on block encoding techniques~\cite{low2019hamiltonian,chakraborty2019ICALP,van2018improvements} and quantum singular value transformation~\cite{gilyen2019quantum}. They showed that $O(d\log(d)/\varepsilon)$ queries to the state-preparation unitary and its inverse suffice, and established a nearly tight lower bound $\widetilde{\Omega}(d/\varepsilon)$.
This result achieves a quadratic improvement in the dependence on the error parameter, reducing the classical scaling of $\Theta(1/\varepsilon^2)$ (in the sample access model) to $\Theta(1/\varepsilon)$, thereby achieving what is known as Heisenberg scaling.
Moreover, they also provided protocols applicable to estimating a mixed state $\rho$ of rank $r$, assuming that the state-preparation unitary produces a purification of $\rho$. In this case, they showed that $O(dr\polylog(d)/\varepsilon)$ queries are sufficient. However, since their methods are based on quantum singular value transformation, the use of inverse queries appear to be an inherent limitation of their approach.

\paragraph{Power of inverse queries.}
The power of inverse queries has been extensively investigated across a wide range of quantum information tasks. A particularly relevant and directly related question, known as unitary time-reversal, asks how to simulate a single call to the inverse unitary $U^{-1}$ using only forward queries to $U$. It was recently established that this task has query complexity $\Theta(d^2)$~\cite{chen2024quantum,odake2024analytical,chen2025tight}, implying that an exponential overhead is unavoidable when attempting to directly simulate a protocol involving inverses in an inverse-free setting. 
Exponential separations between the inverse-ful and inverse-free models have also been demonstrated in several other contexts, including quantum learning~\cite{cotler2023information, schuster2024random} and preimage finding with an in-place permutation oracle~\cite{fefferman2015quantum}. Notably, amplitude estimation—a fundamental problem that serves as a key subroutine in many quantum algorithms—has been shown to require inverse access.
Tang and Wright~\cite{tang2025amplitude} established a query lower bound $\Omega(\min\{d, 1/\varepsilon^2\})$ in the inverse-free model, thereby separating it from the $O(1/\varepsilon)$ scaling achievable when inverse queries are available~\cite{brassard2000quantum}.

In contrast, for the unitary estimation task~\cite{kahn2007fast,yang2020optimal,haah2023query,grewal2025query}, one can achieves optimal query complexity without using inverse queries.
Kahn~\cite{kahn2007fast} showed an inverse-free unitary estimation protocol with query complexity $O(f(d)/\varepsilon)$, where the estimate achieves an entanglement infidelity error $\varepsilon^2$ and $f(d)$ was left unspecified. 
Notably, in 2020, Yang, Renner, and Chiribella~\cite{yang2020optimal} were able to fully determine the scaling to be $O(d^2/\varepsilon)$. More recently, Haah, Kothari, O'Donnell, and Tang~\cite{haah2023query} proved that the scaling $O(d^2/\varepsilon)$ even holds for diamond-norm error $\varepsilon$, along with a matching lower bound $\Omega(d^2/\varepsilon)$, even when inverses are allowed.
These findings highlight that the power of inverse queries varies substantially across tasks. A general understanding of when and how an advantage exists remains an open question.

\subsection{Overview of techniques}
The estimation protocol is conceptually simple. We apply $U^{\otimes n}$ in parallel to a suitably chosen probe state $\ket{p}\in\mathcal{H}^{\otimes n}$ and then perform an appropriate measurement on the resulting state to extract the desired information.
The design of the probe state and the measurement is based on representation-theoretic techniques, inspired by the works~\cite{kahn2007fast,yang2020optimal}.
However, different from them, we are only interested in the last column of $U$ (i.e., $U\ket{d}$), thus the probe state should be tailored to extract information exclusively from that column.
Guided by this intuition, we construct a family of semistandard Young tableaux $\Gamma_i$ defined in \cref{fig-10260110}, whose corresponding Gelfand–Tsetlin basis vectors $\ket{\Gamma_i}$ possess the desired properties. The probe state $\ket{p}$ is then taken as a superposition of these vectors with carefully chosen coefficients.

We provide here some intuition about the state $\ket{\Gamma_i}$. Consider the subgroup $\mathbb{U}_{d-1}\subset\mathbb{U}_d$ consisting of unitaries that fix the vector $\ket{d}$. By the definition of Gelfand-Tsetlin basis, the action of any $V\in\mathbb{U}_{d-1}$ on $\ket{\Gamma_i}$ corresponds to the action of $V$ in the $1$-dimensional $\mathbb{U}_{d-1}$-representation labeled by the Young diagram of $d-1$ rows and $L$ columns, which simply multiplies $\ket{\Gamma_i}$ by a global phase $\det(V)^L$. 
This means, for any $U,V\in\mathbb{U}_d$ such that $U\ket{d}=V\ket{d}$, we have $U^\dag V\in\mathbb{U}_{d-1}$ and so the actions of $U$ and $V$ on $\ket{\Gamma_i}$ differ only by the phase $\det(U^\dag V)^L$.
Therefore, $\ket{\Gamma_i}$ probes information solely from the last column.

One can also interpret it using dual representations. 
For convenience, let us assume the unitary $U$ has determinant $1$ (i.e., $U\in \mathbb{SU}_d$). 
When restricting to $\mathbb{SU}_d$-representations, we can discard the first $i$ full-length columns of $\Gamma_i$.
Then, let $\lambda$ be the Young diagram with $d-1$ rows and $1$ column and let $S_i$ be the semistandard Young tableau obtained from $\lambda$ by filling it with the integers $\{1,\ldots,d\}\setminus i$.
There is a natural isomorphism between $\mathbb{SU}_d$-representations $\mathcal{Q}_{\lambda}^d \rightarrow \mathcal{Q}_{-\Box}^d: \ket{S_i} \mapsto (-1)^{d-i}\ket{\underline{i}}$, where $\mathcal{Q}_{-\Box}^d$ is the dual representation of $\mathcal{Q}_{\Box}^d\cong\mathcal{H}$, $\ket{\underline{i}}\in\mathcal{Q}_{-\Box}^d$ is the covector corresponding to $\ket{i}\in\mathcal{Q}_{\Box}^d$. Thus, under these isomorphisms, we can interpret $\Gamma_i$ as
\[\includegraphics[width=0.8\linewidth]{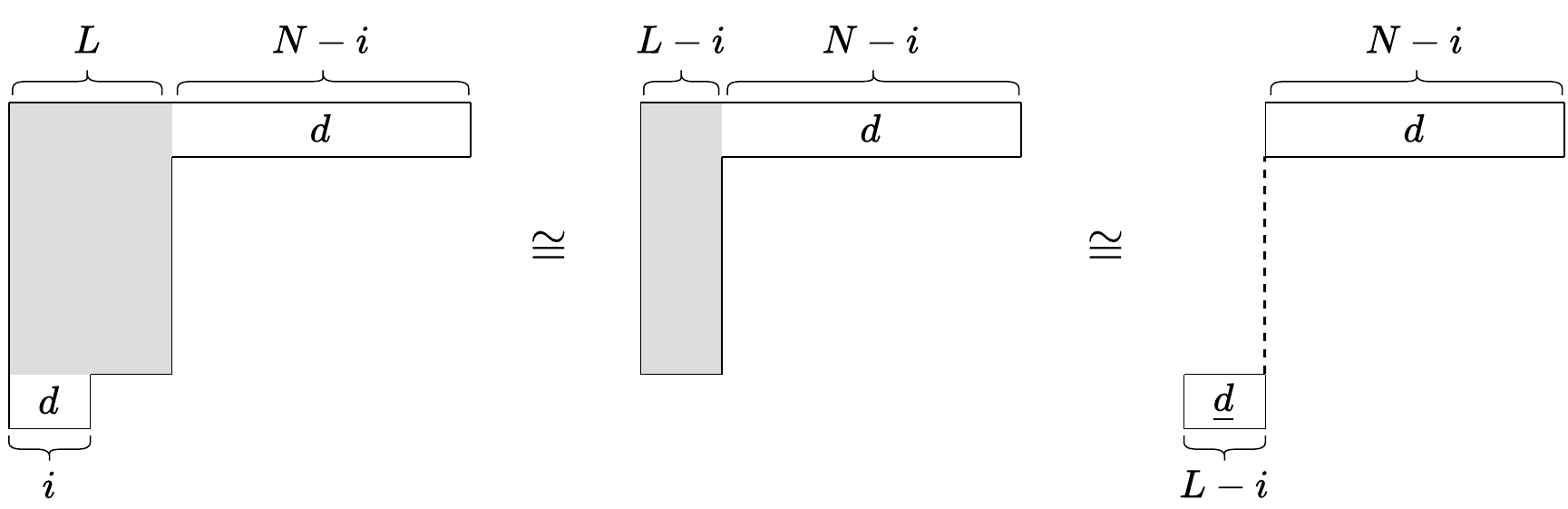},\]
where the gray region contains multiple $S_d$ arranged horizontally, the white region labeled by $d$ ($\underline{d}$) contains boxes filling with $d$ ($\underline{d}$), the final tableau represents a mixed semistandard Young tableau~\cite{grinko2023gelfand,nguyen2023mixed}.
This means $\ket{\Gamma_i}$ can be viewed as an irreducible components in 
\[\ket{\underline{d}}^{\otimes L-i}\otimes \ket{d}^{\otimes N-i}\in (\mathcal{Q}_{-\Box}^d)^{\otimes L-i}\otimes (\mathcal{Q}_{\Box}^d)^{\otimes N-i}.\]
Note that $U$ acts as $U^*$ on $\mathcal{Q}_{-\Box}^d$. Thus, we can see that the action of $U$ on $\ket{\Gamma_i}$ depends only on $U\ket{d}$.

Then, we use the pretty good measurement (PGM)~\cite{hausladen1994pretty,belavkin1975optimal,belavkin1975optimum,barnum2002reversing} to extract information from the state $U^{\otimes n}\ket{p}$. Specifically, we consider it as a state discrimination problem where our goal is to identify the label $U$ from the ensemble $\{(\textup{d} V, V^{\otimes n}\ket{p})\}_{V\in\mathbb{U}_d}$ with $\textup{d} V$ denoting the Haar measure. Let $\widehat{U}$ denote the measurement outcome. We set $\widehat{U}\ket{d}$ as the estimate of $U\ket{d}$, and show that the expected infidelity of this estimator can be upper bounded by $O(d^3/(n^2+nd^2))$, thereby establishing \cref{thm-8100202}.

\subsection{Discussion}\label{sec-10271551}
{\renewcommand{\arraystretch}{1.5}
\begin{table}[ht]
\centering
\begin{tabular}{|c|cc|}
\hline
                   & \multicolumn{1}{c|}{Inverse-free} & Inverse-ful \\ \hline
Unitary estimation                  & \multicolumn{2}{c|}{$\Theta(d^2/\varepsilon)$~\cite{haah2023query}}                           \\ \hline
\multirow{2}{*}{State estimation} & \multicolumn{1}{c|}{$O(\min\{d^{3/2}/\varepsilon,d/\varepsilon^2\})$ {(This work)}}             &    \multirow{2}{*}{$\widetilde{\Theta}(d/\varepsilon)$~\cite{van2023quantum}}       \\
                   & \multicolumn{1}{c|}{$\widetilde{\Omega}(d/\varepsilon)$~\cite{van2023quantum}}            &           \\ \hline
\multirow{2}{*}{Amplitude estimation} & \multicolumn{1}{c|}{$O(\min\{d^{3/2}/\varepsilon,1/\varepsilon^2\})$ {(This work)}}             &    \multirow{2}{*}{$\Theta(1/\varepsilon)$~\cite{brassard2000quantum}}         \\
                   & \multicolumn{1}{c|}{$\Omega(\min\{d,1/\varepsilon^2\})$~\cite{tang2025amplitude}}             &             \\ \hline
\end{tabular}
\caption{Query upper and lower bounds for unitary, state and amplitude estimation in inverse-free and inverse-ful settings.} \label{tab-10211545}
\end{table}
}

\cref{tab-10211545} summarizes the existing results of unitary, state and amplitude estimation in both inverse-free and inverse-ful settings.
In this table, we conjecture that our upper bound for inverse-free state estimation is tight. We also conjecture that the query complexity of inverse-free amplitude estimation is $\Theta(\min\{d/\varepsilon,1/\varepsilon^2\})$.
Let us consider the additional overhead if one aims to achieve Heisenberg scaling but removes inverse access.
If our conjectures hold, then the corresponding overheads are: $\Theta(1)$ for unitary estimation; $\Theta(\sqrt{d})$ for state estimation; and $\Theta(d)$ for amplitude estimation.
These might suggest that, rather than exhibiting a sharp transition between estimating all features and estimating a single desired feature, the benefit of inverse access may increase smoothly as the number of features to be estimated decreases.

\section{Preliminaries}
\subsection{Young diagrams and Young tableaux}
A \textit{Young diagram} \(\lambda\) consisting of \(n\) boxes and \(\ell\) rows is a partition \((\lambda_1,\ldots,\lambda_\ell)\) of \(n\) such that \(\sum_{i=1}^\ell \lambda_i=n\) and \(\lambda_1\geq\cdots \geq \lambda_\ell> 0\). The $i$-th rows consists of $\lambda_i$ boxes. 
By convention, the Young diagram is drawn with left-justified rows, arranged from top to bottom.
For example, the Young diagram \(\lambda=(4,3,1)\) is drawn as:
\[\vcenter{\hbox{\scalebox{0.9}{\begin{ytableau}~ &~&~&~\\~&~&~\\~\end{ytableau}}}}.\]
We use $\ell(\lambda)$ to denote the number of rows of $\lambda$.
We use \(\lambda\vdash n\) to denote that \(\lambda\) is a Young diagram with \(n\) boxes and use $\lambda\vdash_d n$ to denote that $\lambda\vdash n$ and $\ell(\lambda)\leq d$.

Let $\mu,\lambda$ be Young diagrams. We write $\mu\nearrow\lambda$ or $\lambda\searrow\mu$ if $\lambda$ can be obtained from $\mu$ by adding one box.
We also write $\lambda\succsim\mu$ or $\mu\precsim \lambda$ if  $\ell(\mu)\leq \ell(\lambda)=:\ell$ and 
\[\lambda_1\geq \mu_1\geq \lambda_2\geq \mu_2\geq\cdots\geq \lambda_\ell\geq \mu_\ell,\]
where we define $\mu_i=0$ if $i>\ell(\mu)$. 
Equivalently, this means the row-lengths of $\lambda$ and $\mu$ interlace and we say $\mu$ interlaces $\lambda$. An example is $(4,3,1)\succsim (3,3,1)\succsim (3,1)\succsim (2)$:
\begin{equation}\label{eq-10221712}
\vcenter{\hbox{\scalebox{0.9}{\begin{ytableau}~ &~&~&~\\~&~&~\\~\end{ytableau}}}}
\quad\succsim\quad
\vcenter{\hbox{\scalebox{0.9}{\begin{ytableau}~ &~&~\\~&~&~\\~\end{ytableau}}}}
\quad\succsim\quad
\vcenter{\hbox{\scalebox{0.9}{\begin{ytableau}~ &~&~\\~\end{ytableau}}}}
\quad\succsim\quad
\vcenter{\hbox{\scalebox{0.9}{\begin{ytableau}~ &~\end{ytableau}}}}.
\end{equation}

A \textit{semistandard Young tableau} of shape \(\lambda\vdash n\) and alphabet $[d]$ is a filling of $\lambda$ using the integers from $[d]$, such that the entries weakly increase from left to right in each row and
strictly increase from top to bottom in each column. An example of semistandard Young tableau of shape $(4,3,1)$ and alphebet $[4]$ is
\begin{equation}\label{eq-10221716}
\vcenter{\hbox{\scalebox{0.9}{\begin{ytableau}1&1&2&4\\2&3&3\\3\end{ytableau}}}}.
\end{equation}
We use $\textup{SSYT}(\lambda,d)$ to denote the set of semistandard Young tableaux of shape $\lambda$ with alphabet $[d]$. We also use $\Sh(S)$ to denote the shape of the semistandard Young tableau $S$.

\subsection{Gelfand-Tsetlin basis}
Let $\mathcal{H}\cong\mathbb{C}^d$ be a $d$-dimensional Hilbert space with an orthonormal basis $\{\ket{1},\ket{2},\ldots,\ket{d}\}$.
Let $\mathbb{U}_d$ denote the group of $d\times d$ unitary matrices on $\mathcal{H}$ and $\mathfrak{S}_n$ denote the group of permutations of set $[n]=\{1,2,\ldots,n\}$.

The space $\mathcal{H}^{\otimes n}$ is a representation of the symmetric group \(\mathfrak{S}_n\) and the unitary group \(\mathbb{U}_d\).
The unitary group $\mathbb{U}_d$ acts by simultaneous ``rotation'' as \(U^{\otimes n}\) for any \(U\in \mathbb{U}_d\) and the symmetric group $\mathfrak{S}_n$ acts by permuting tensor factors:
\begin{equation*}
P(\pi)\ket{\psi_1}\cdots\ket{\psi_n}=\ket{\psi_{\pi^{-1}(1)}}\cdots\ket{\psi_{\pi^{-1}(n)}},
\end{equation*}
for any \(\pi\in\mathfrak{S}_n\). 
By the Schur-Weyl duality (see, e.g., \cite{fulton2013representation}), we have the following decomposition 
\begin{equation}\label{eq-10231550}
\mathcal{H}^{\otimes n}\stackrel{\mathbb{U}_d\times\mathfrak{S}_n}{\cong} \bigoplus_{\lambda\vdash_d n}\mathcal{Q}^d_\lambda\otimes\mathcal{P}_\lambda.
\end{equation}
Here, we use $\mathcal{Q}_\lambda^d$ and $\mathcal{P}_\lambda$ to denote the irreducible representations of $\mathbb{U}_d$ and $\mathfrak{S}_n$ corresponding to the Young diagram $\lambda$.
We remark that in this paper, without loss of generality, all representations are unitary representations and the homomorphisms between representations preserve inner product.
We use $\texttt{q}_{\lambda}(U)$ to denote the action of $U$ on $\mathcal{Q}_\lambda^d$.
The dimension of $\mathcal{Q}_\lambda^d$ can be computed using the Weyl dimension formula (see, e.g., \cite{etingof2011introduction})
\begin{equation}\label{eq-10271256}
\dim(\mathcal{Q}_\lambda^d)=\prod_{1\leq i< j\leq d}\frac{\lambda_i-i - \lambda_j+j}{j-i}.
\end{equation}

\paragraph{Branching rule of $\mathbb{U}_d$.}
We now look deeper into the structure of $\mathcal{Q}_\lambda^d$. Consider the following chain of embeddings of groups:
\[\mathbb{U}_1\rightarrow\mathbb{U}_2\rightarrow\cdots\rightarrow\mathbb{U}_d,\]
where the embedding $\mathbb{U}_i\rightarrow \mathbb{U}_{i+1}$ is given by
\[U\mapsto \begin{bmatrix}
U & 0 \\
0 & 1
\end{bmatrix},\]
for any $U\in\mathbb{U}_i$.
Let $\lambda\vdash_i n$ be a Young diagram of $n$ boxes with at most $i$ rows and $\mathcal{Q}^i_\lambda$ be an irreducible representation of $\mathbb{U}_i$. Then, $\mathcal{Q}^i_\lambda$ can be viewed as a representation of $\mathbb{U}_{i-1}$, which is called the restriction of $\mathcal{Q}^i_\lambda$ to $\mathbb{U}_{i-1}$ and is denoted by $\textup{Res}^{\mathbb{U}_{i}}_{\mathbb{U}_{i-1}}(\mathcal{Q}^i_\lambda)$. Note that $\textup{Res}^{\mathbb{U}_{i}}_{\mathbb{U}_{i-1}}(\mathcal{Q}^i_\lambda)$ also decomposes as a direct sum of irreducible representations of $\mathbb{U}_{i-1}$. Specifically, we have the following branching rule (see, e.g., \cite{goodman2009symmetry})
\begin{equation}\label{eq-10222015}
\textup{Res}_{\mathbb{U}_{i-1}}^{\mathbb{U}_i}(\mathcal{Q}_\lambda^i)\stackrel{\mathbb{U}_{i-1}}{\cong}\bigoplus_{\substack{\mu:\mu\precsim\lambda\\ \ell(\mu)\leq i-1}}\mathcal{Q}^{i-1}_\mu.
\end{equation}
Thus, we can see that $\mathcal{Q}^d_\lambda$ decomposes into a direct sum of pairwise non-equivalent irreducible representations and they are pairwise orthogonal.
By iterating this process (i.e., taking further restrictions on $\mathcal{Q}^d_\mu$), we finally obtain a direct sum of the one-dimensional irreducible representations of $\mathbb{U}_1$ (since $\mathbb{U}_1$ is Abelian thus has only $1$-dimensional irreducible representations). Therefore, this process determines an orthonormal basis (up to phases) of $\mathcal{Q}^d_\lambda$ and each basis vector can be parameterized by its branching path:
\begin{equation}\label{eq-10222040}
\lambda^{(1)}\rightarrow\lambda^{(2)}\rightarrow\cdots\rightarrow\lambda^{(d)},
\end{equation}
where $\lambda^{(d)}=\lambda$, $\lambda^{(i-1)}\precsim \lambda^{(i)}$ and $\ell(\lambda^{(i)})\leq i$.
This basis is called the \textit{Gelfand-Tsetlin basis} of $\mathcal{Q}_\lambda^d$.

Furthermore, we can see that the path in \cref{eq-10222040} uniquely corresponds to a semistandard Young tableau $S$ of shape $\lambda=\lambda^{(d)}$, where the integer $i$ is assigned to the boxes in $\lambda^{(i)}\setminus\lambda^{(i-1)}$, in which $\lambda^{(i)}\setminus\lambda^{(i-1)}$ denotes the shape by removing from the shape of $\lambda^{(i)}$ all the boxes belonging to $\lambda^{(i-1)}$. 
For example, the chain in \cref{eq-10221712} corresponds to the semistandard Young tableau in \cref{eq-10221716}.
By this mean, we can identify a semistandard Young tableau $S$ with alphabet $[d]$ uniquely by a length-$d$ interlacing sequence of Young diagrams as in \cref{eq-10222040}. 
Thus each vector in the Gelfand-Tsetlin basis of $\mathcal{Q}_\lambda^d$ is labeled by a semistandard Young tableau $S$ of shape $\lambda$ with alphabet $[d]$, and we will use $\ket{S}$ to denote such vector.

\subsection{Clebsch-Gordan coefficients}\label{sec-10241631}
Let us consider the tensor product of the irreducible representations $\mathcal{Q}_\lambda^d$ and $\mathcal{Q}_{\Box}^d$, where $\Box$ denotes the Young diagram with only one box and thus $\mathcal{Q}_{\Box}^d\cong \mathcal{H}$ is just the defining representation of $\mathbb{U}_d$. 
This tensor product representation is generally reducible and decomposes as
\[\mathcal{Q}_\lambda^d \otimes \mathcal{H}\stackrel{\mathbb{U}_d}{\cong}\bigoplus_{\substack{\mu:\lambda\nearrow\mu\\ \ell(\mu)\leq d}}\mathcal{Q}_\mu^d,\]
due to the Pieri's rule (see, e.g., \cite{goodman2009symmetry}). Under this isomorphism, we can write the tensor product of vectors $\ket{S}\in\mathcal{Q}^d_\lambda$ and $\ket{i}\in \mathcal{H}$, where $S\in\textup{SSYT}(\lambda,d)$ and $i\in[d]$, as
\[\ket{S}\otimes \ket{i}=\sum_{\substack{\mu:\lambda\nearrow\mu\\ \ell(\mu)\leq d}} \sum_{T\in \textup{SSYT}(\mu,d)} C^{S,i}_{T}\ket{T},\]
where $C^{S,i}_{T}$ is called the Clebsch-Gordan coefficients which can be further expressed as a product of reduced Wigner coefficients~\cite{harrow2005applications,bacon2007quantum,bacon2006efficient} or scalar factors~\cite{Vilenkin1992}.
When $i=d$, the decomposition has a simpler form:
\begin{equation}\label{eq-10241543}
\ket{S}\otimes \ket{d}=\sum_{k=1}^d C^{S}_{k} \ket{S_{k}},
\end{equation}
where $S_{k}$ denotes the semistandard Young tableau obtained from $S$ by adding a box filling with $d$ to the $k$-th row.\footnote{Note that $S_{k}$ may not be a valid semistandard Young tableau, but in such case $C_{k}^{S}=0$.} Let $\lambda^{(1)}\rightarrow\cdots\rightarrow\lambda^{(d)}=\lambda$ be the interlacing sequence of Young diagrams corresponding to $S$, then $C_{k}^S$ can be expressed as (see, e.g., Equation 3 in \cite[Sec. 18.2.10]{Vilenkin1992}, and also \cite[Sec. 10]{pelecanos2025debiased}, \cite[Apdx. A]{grinko2023gelfand}):
\begin{equation}\label{eq-10241544}
 C_{k}^S=\left|\frac{ \prod_{j=1}^{d-1}(\lambda^{(d-1)}_j-j-\lambda_k+k-1)}{ \prod_{\substack{j=1,j\neq k}}^d(\lambda_j-j-\lambda_k+k)}\right|^{1/2}.
\end{equation}

\section{Estimation protocol}
In this section, we introduce the estimation protocol.
Let $\mathcal{H}\cong\mathbb{C}^d$ be a $d$-dimensional Hilbert space and $U$ be an unknown unitary on $\mathcal{H}$.
We apply $U^{\otimes n}$ on a state $\ket{p}\in\mathcal{H}^{\otimes n}$, called the probe state. Then we perform a POVM on $U^{\otimes n}\ket{p}$, obtaining a ``partial'' estimate $\widehat{U}$ of $U$. Finally, we set $\widehat{U}\ket{d}$ as our estimate of $U\ket{d}$.

\subsection{A coherent system on \texorpdfstring{$\Gamma$}{Gamma}-shaped tableaux}
Let $n$ be the number of queries. Without loss of generality, we can assume $n$ is a multiple of $2d$.
Let 
\[L\coloneqq \frac{n}{2d} \quad\textup{and}\quad N\coloneqq (d+1)L=\frac{(d+1)n}{2d}.\]
Then, we define a family of semistandard Young tableaux that have ``$\Gamma$''-like shape.
\begin{definition}[$\Gamma$-shaped tableaux]
For any integer $i$ such that $0\leq i\leq L$, let $\Gamma_{i}$ and $\Gamma_{i}^+$ be the semistandard Young tableaux defined in \cref{fig-880350}.
\end{definition}
\begin{figure}[ht]
    \centering
    \begin{subfigure}[b]{0.4\linewidth}
        \centering
    \includegraphics[width=1.0\linewidth]{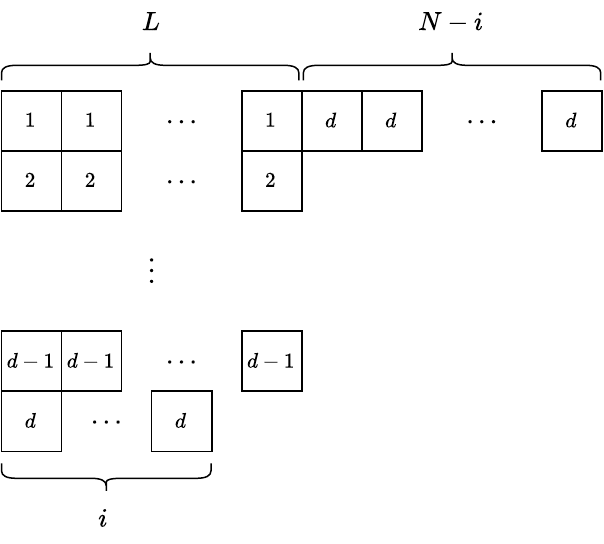}
    \caption{$\Gamma_i$}\label{fig-10260110}
    \end{subfigure}
    \hspace{10mm}
    \begin{subfigure}[b]{0.4\linewidth}
    \centering
    \includegraphics[width=1.0\linewidth]{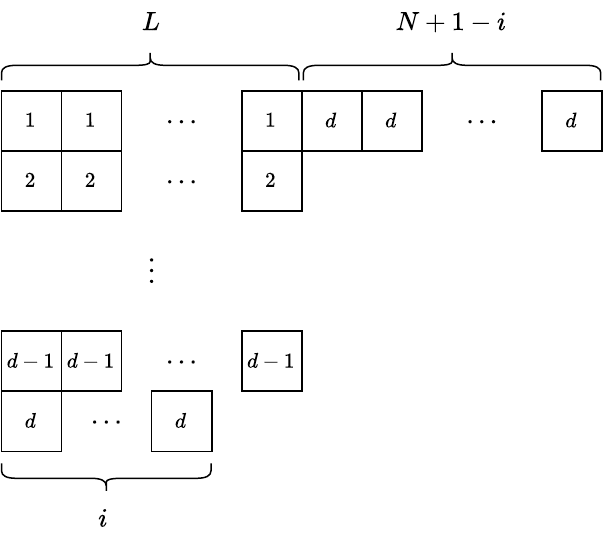}
    \caption{$\Gamma_i^+$}
    \end{subfigure}
    \caption{Definition of the semistandard Young tableaux $\Gamma_i$ and $\Gamma_i^+$.}
    \label{fig-880350}
\end{figure}
It is easy to see that $\Gamma_i$ contains $(d-1)L+N=n$ boxes and $\Gamma_i^+$ contains $(d-1)L+N+1=n+1$ boxes.
Let $\gamma_i\coloneqq\Sh(\Gamma_i)$ and $\gamma_{i}^+=\Sh(\Gamma_i^+)$ be the shapes of $\Gamma_i$ and $\Gamma_i^+$.
Then, $\ket{\Gamma_i}$ and $\ket{\Gamma_i^+}$ are two vectors in the Gelfand-Tsetlin bases of $\mathcal{Q}_{\gamma_i}^d$ and $\mathcal{Q}_{\gamma_i^+}^d$, respectively.

We note that these tableaux are correlated if we consider the tensor product of $\ket{\Gamma_i}$ and $\ket{d}$.
By the Clebsch-Gordan decomposition (see \cref{sec-10241631}), $\ket{\Gamma_i}\otimes \ket{d}$ decomposes into a sum of vectors corresponding to the semistandard Young tableaux obtained from $\Gamma_i$ by adding one box filling with $d$. It is easy to see that such box can only be added at the first and the $d$-th rows.
More specifically, from \cref{eq-10241543} and \cref{eq-10241544}, we have
\begin{align}
\ket{\Gamma_i}\otimes \ket{d}&=\left|\frac{\prod_{1\leq j\leq d-1}(N-i+j)}{(L+N+d-2i-1)\prod_{2\leq j \leq d-1}(N-i+j-1)}\right|^{1/2}\ket{\Gamma_{i}^+} \nonumber \\
&\qquad\qquad\qquad\qquad\qquad\qquad+ \left|\frac{\prod_{1\leq j\leq d-1}(L+d-i-j-1)}{(L+N+d-2i-1)\prod_{2\leq j\leq d-1}(L+d-i-j)}\right|^{1/2}\ket{\Gamma_{i+1}^+}\nonumber\\
&=\sqrt{\frac{N+d-i-1}{L+N+d-2i-1}}\ket{\Gamma_{i}^+}+\sqrt{\frac{L-i}{L+N+d-2i-1}}\ket{\Gamma_{i+1}^+}.\label{eq-830204}
\end{align}
We define
\begin{equation}\label{eq-872206}
\alpha_i\coloneqq \frac{N+d-i-1}{L+N+d-2i-1},\quad\quad \beta_i\coloneqq \frac{L-i}{L+N+d-2i-1}.
\end{equation}
Then, \cref{eq-830204} can be written as
\begin{equation}\label{eq-831608}
\ket{\Gamma_i}\otimes \ket{d}=\sqrt{\alpha_i}\ket{\Gamma_i^+}+\sqrt{\beta_i}\ket{\Gamma^+_{i+1}},
\end{equation}
which is also shown in \cref{fig-10241656}.
\begin{figure}[ht]
    \centering
    \includegraphics[width=0.9\linewidth]{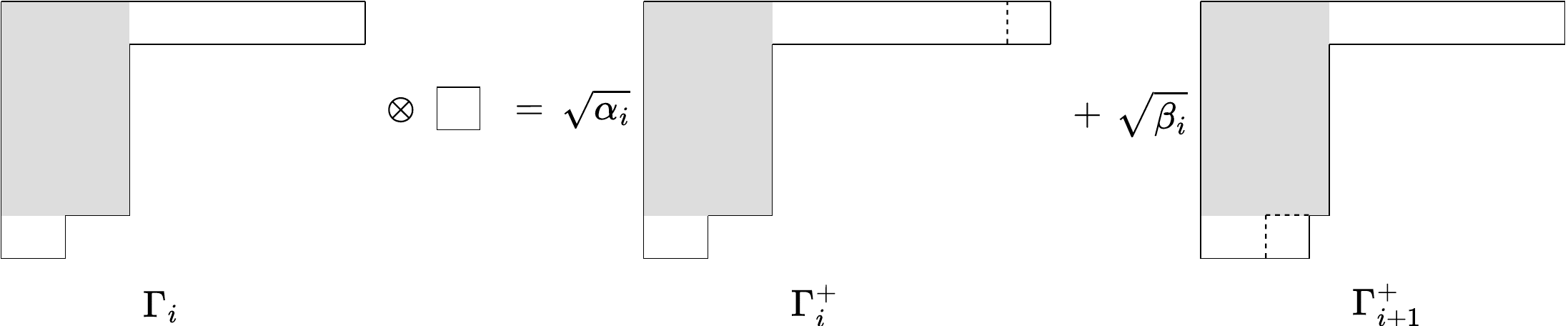}
    \caption{The decomposition of $\ket{\Gamma_i}\otimes \ket{d}$}
    \label{fig-10241656}
\end{figure}

\subsection{Probe state}
Then, we define the probe state $\ket{p}$.
For each $0\leq i\leq L$, fix an arbitrary unit vector $\ket{\bot_{\gamma_i}}\in\mathcal{P}_{\gamma_i}$, where $\mathcal{P}_{\gamma_i}$ the irreducible representation of $\mathfrak{S}_n$ labeled by $\gamma_i$. The state $\ket{\bot_{\gamma_i}}$ will be used in our definition of the probe state and measurement, while the final results are independent of the choice of $\ket{\bot_{\gamma_i}}$.
\begin{definition}[The probe state]
\begin{equation}\label{eq-882202}
\ket{p}\coloneqq \sum_{i=0}^L f_{i} \ket{\Gamma_{i}}\ket{\bot_{\gamma_i}}\in\mathcal{H}^{\otimes n},
\end{equation}
where $f_i$ are real numbers such that $\sum_{i=0}^L f_i^2=1$, and will be determined in \cref{sec-10251701}.
\end{definition}

\subsection{Pretty good measurement}
By applying $U^{\otimes n}$ to the probe state $\ket{p}$, we obtain a state $U^{\otimes n}\ket{p}$ that encodes some information of $U$. 
To extract the information from this state, we view it as a state discrimination problem.
That is, suppose we are given an unknown state drawn from the ensemble 
$\{(\textup{d} V, V^{\otimes n}\ket{p})\}_{V\in\mathbb{U}_d}$, where $\textup{d} V$ is the Haar measure, and we want to determine which state we have received (i.e., to identify the corresponding label $V$).
Our idea is to perform the so-called ``pretty good measurement'' (PGM)~\cite{hausladen1994pretty,belavkin1975optimal,belavkin1975optimum,barnum2002reversing}.

The PGM with respect to this ensemble is defined as a continuous POVM 
\[\{E_V\}_{V\in\mathbb{U}_d},\]
where 
\[E_V\coloneqq \textup{d}V \, E^{-\frac{1}{2}} V^{\dag\otimes n}\ketbra{p}{p}V^{\otimes n} E^{-\frac{1}{2}},\]
in which $\textup{d}V$ is the Haar measure, $E^{-\frac{1}{2}}$ is the pseudo-inverse of $E^{\frac{1}{2}}$ and $E$ is the average of the ensemble:
\begin{align}
E&\coloneqq \int \textup{d}V \, V^{\otimes n}\ketbra{p}{p} V^{\dag\otimes n} \nonumber\\
&=\int \textup{d}V \, \sum_{i,j=0}^L f_i f_j \texttt{q}_{\gamma_i}(V) \ketbra{\Gamma_i}{\Gamma_j}\texttt{q}_{\gamma_j}(V^\dag) \otimes \ketbra{\bot_{\gamma_i}}{\bot_{\gamma_j}} \nonumber \\
&=\sum_{i=0}^L \frac{f_i^2}{\dim(\mathcal{Q}_{\gamma_i}^d)} I_{\mathcal{Q}_{\gamma_i}^d} \otimes \ketbra{\bot_{\gamma_i}}{\bot_{\gamma_i}},\label{eq-10242214}
\end{align}
where \cref{eq-10242214} is by the Schur's lemma.
Thus, we can see that
\[E_V=\textup{d}V \, \ketbra{\eta_V}{\eta_V},\]
where 
\begin{equation}\label{eq-882206}
\ket{\eta_{V}}\coloneqq \sum_{i=0}^L \sqrt{\dim(\mathcal{Q}_{\gamma_i}^d)}\texttt{q}_{\gamma_i}(V) \ket{\Gamma_{i}}\ket{\bot_{\gamma_i}}.
\end{equation}

\subsection{The protocol}
Then, the estimation protocol is shown in Protocol~\ref{alg-882158}.

\begin{algorithm}[H]
\caption{\raggedright Estimating $U\ket{d}$}\label{alg-882158}
    \begin{algorithmic}[1]
    \Require $n$ forward queries to $U$.
    \Ensure An estimate of $U\ket{d}$.
    \State Prepare the probe state $\ket{p}$. \Comment{see \cref{eq-882202}}
    \State Apply $U^{\otimes n}$ on $\ket{p}$.
    \State Apply the POVM $\{\textup{d}V\, \ketbra{\eta_V}{\eta_V}\}_{V\in\mathbb{U}_d}$ on $U^{\otimes n}\ket{p}$ and obtain an outcome $\widehat{U}$. \Comment{see \cref{eq-882206}}
    \State\textbf{Return} $\widehat{U}\ket{d}$.
    \end{algorithmic}
\end{algorithm}

\section{Bounding the error}\label{sec-10241855}
In this section, we bound the error of Protocol~\ref{alg-882158}.

For $0\leq i\leq L$, we define 
\begin{equation}\label{eq-10270121}
\begin{split}
x_i&\coloneqq \sqrt{\frac{(N+d-i-1)(N-i+1)}{(L+N+d-2i-1)(L+N+d-2i)}},\\
y_{i}&\coloneqq \sqrt{\frac{(L-i+1)(L+d-i-1)}{(L+N+d-2i+1)(L+N+d-2i)}}.
\end{split}
\end{equation}
Then, we have the following lemma.
\begin{lemma}
Let $\widehat{U}\ket{d}$ be the output of Protocol~\ref{alg-882158}.
The expectation of fidelity between $\widehat{U}\ket{d}$ and $U\ket{d}$ is
\begin{equation}\label{eq-872234}
\E_{\widehat{U}}\!\left[\left|\bra{d}\widehat{U}^\dag U\ket{d} \right|^2\right]=f_0^2x_0^2+\sum_{i=1}^L (f_{i}x_i+f_{i-1}y_{i})^2.
\end{equation}
\end{lemma}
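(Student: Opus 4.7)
The plan is to express the expected fidelity as a Haar integral of matrix coefficients on $\mathbb{U}_d$ and collapse it by Schur orthogonality. First I compute the POVM outcome density. Plugging \cref{eq-882202,eq-882206} into $\langle\eta_V|U^{\otimes n}|p\rangle$ and using that the isotypic blocks $\mathcal{Q}_{\gamma_i}^d\otimes\mathcal{P}_{\gamma_i}$ are mutually orthogonal (the multiplicity vectors $\ket{\bot_{\gamma_i}}$ live in distinct $\mathcal{P}_{\gamma_i}$), all $i\neq j$ cross terms vanish and
\[
\langle\eta_V|U^{\otimes n}|p\rangle = \sum_{i=0}^{L} f_i\sqrt{\dim\mathcal{Q}_{\gamma_i}^d}\,\langle\Gamma_i|\texttt{q}_{\gamma_i}(V^\dag U)|\Gamma_i\rangle.
\]
The expected fidelity is $\int dV\,|\langle d|V^\dag U|d\rangle|^2\,|\langle \eta_V|U^{\otimes n}|p\rangle|^2$. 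Substituting $V\mapsto UW^\dag$ (which preserves the Haar measure) eliminates $U$ entirely and leaves $\E[|\langle d|\widehat{U}^\dag U|d\rangle|^2]=\int dW\,|\langle d|W|d\rangle|^2\,|A(W)|^2$, where $A(W):=\sum_{i=0}^L f_i\sqrt{\dim\mathcal{Q}_{\gamma_i}^d}\,\langle\Gamma_i|\texttt{q}_{\gamma_i}(W)|\Gamma_i\rangle$.

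The crux is to absorb $\langle d|W|d\rangle$ into the sum using the Clebsch--Gordan identity \cref{eq-831608}. Since $\texttt{q}_{\gamma_i}(W)\otimes W$ acts block-diagonally as $\texttt{q}_{\gamma_i^+}(W)\oplus\texttt{q}_{\gamma_{i+1}^+}(W)$ on $\mathcal{Q}_{\gamma_i}^d\otimes\mathcal{H}\cong \mathcal{Q}_{\gamma_i^+}^d\oplus\mathcal{Q}_{\gamma_{i+1}^+}^d$, sandwiching \cref{eq-831608} on both sides of this operator yields (the cross-component matrix elements vanish by orthogonality of the two summands)
\[
\langle\Gamma_i|\texttt{q}_{\gamma_i}(W)|\Gamma_i\rangle\,\langle d|W|d\rangle = \alpha_i\langle\Gamma_i^+|\texttt{q}_{\gamma_i^+}(W)|\Gamma_i^+\rangle + \beta_i\langle\Gamma_{i+1}^+|\texttt{q}_{\gamma_{i+1}^+}(W)|\Gamma_{i+1}^+\rangle.
\]
Multiplying by $f_i\sqrt{\dim\mathcal{Q}_{\gamma_i}^d}$, summing over $i$, and regrouping by the right-hand shape $\gamma_j^+$ gives
\[
\langle d|W|d\rangle A(W) = \sum_{j=0}^L c_j\langle\Gamma_j^+|\texttt{q}_{\gamma_j^+}(W)|\Gamma_j^+\rangle,
\]
with $c_0=f_0\alpha_0\sqrt{\dim\mathcal{Q}_{\gamma_0}^d}$ and $c_j=f_j\alpha_j\sqrt{\dim\mathcal{Q}_{\gamma_j}^d}+f_{j-1}\beta_{j-1}\sqrt{\dim\mathcal{Q}_{\gamma_{j-1}}^d}$ for $1\le j\le L$. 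The would-be $j=L+1$ term drops out because $\beta_L=0$ by \cref{eq-872206}.

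Since the Young diagrams $\gamma_0^+,\ldots,\gamma_L^+$ are pairwise distinct, Schur orthogonality of irreducible matrix coefficients gives $\int dW\,|\langle d|W|d\rangle A(W)|^2 = \sum_{j=0}^L c_j^2/\dim\mathcal{Q}_{\gamma_j^+}^d$. The claim then reduces to verifying the two identities $x_j^2 = \alpha_j^2\dim\mathcal{Q}_{\gamma_j}^d/\dim\mathcal{Q}_{\gamma_j^+}^d$ and $y_j^2 = \beta_{j-1}^2\dim\mathcal{Q}_{\gamma_{j-1}}^d/\dim\mathcal{Q}_{\gamma_j^+}^d$ against \cref{eq-10270121,eq-872206}, which I would do by reading the row-lengths of $\gamma_{j-1},\gamma_j,\gamma_j^+$ off of \cref{fig-880350}, substituting into Weyl's formula \cref{eq-10271256}, and cancelling common factors. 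This last bookkeeping is the main obstacle: the Weyl-ratio simplification is elementary but delicate, since any off-by-one error in a row parameter would propagate into the wrong $x_j,y_j$.
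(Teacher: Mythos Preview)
Your proof is correct and follows essentially the same route as the paper: compute $\langle\eta_V|U^{\otimes n}|p\rangle$ from the block-orthogonality of the isotypic components, absorb the factor $\langle d|W|d\rangle$ via the Clebsch--Gordan identity \cref{eq-831608}, reindex using $\beta_L=0$, and collapse the Haar integral by Schur orthogonality. The paper packages your final dimension-ratio check as a separate statement (\cref{fact-872212}), stating directly the un-squared identities $x_j=\alpha_j\sqrt{\dim\mathcal{Q}^d_{\gamma_j}/\dim\mathcal{Q}^d_{\gamma_j^+}}$ and $y_j=\beta_{j-1}\sqrt{\dim\mathcal{Q}^d_{\gamma_{j-1}}/\dim\mathcal{Q}^d_{\gamma_j^+}}$; since all quantities involved are manifestly nonnegative, your squared formulation is equivalent, and the Weyl-formula simplification you flag as the ``main obstacle'' is indeed just the telescoping of the $\prod_{2\le j\le d-1}$ products, exactly as in the paper's proof of \cref{fact-872212}.
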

\begin{proof}
Consider the probability $\Pr(\widehat{U}; U)$ of getting the measurement outcome $\widehat{U}$ when performing the PGM on the state $U^{\otimes n}\ket{p}$:
\begin{align}
\Pr(\widehat{U}; U)&=\textup{d} \widehat{U} \,  |\bra{\eta_{\widehat{U}}}U^{\otimes n}\ket{p}|^2\nonumber\\
&= \textup{d}\widehat{U} \left|\sum_{i=0}^L f_i\sqrt{\dim(\mathcal{Q}_{\gamma_i}^d)}\cdot \bra{\Gamma_{i}}\texttt{q}_{\gamma_i}\!\left(\widehat{U}^\dag U\right)\ket{\Gamma_{i}}\right|^2.\nonumber
\end{align}
Upon getting measurement outcome $\widehat{U}$, we set $\widehat{U}\ket{d}$ as the estimate of $U\ket{d}$. Thus the expectation of (squared) fidelity is
\begin{align}
\E_{\widehat{U}}\!\left[\left|\bra{d}\widehat{U}^\dag U\ket{d} \right|^2\right] &=\int \textup{d}\widehat{U} \left|\sum_{i=0}^L f_i\sqrt{\dim(\mathcal{Q}^d_{\gamma_i})}\cdot \bra{\Gamma_{i}}\texttt{q}_{\gamma_i}\!\left(\widehat{U}^\dag U\right)\ket{\Gamma_{i}}\right|^2\cdot \left|\bra{d}\widehat{U}^\dag U\ket{d}\right|^2\nonumber\\
&=\int \textup{d}\widehat{U} \left|\sum_{i=0}^L f_i \sqrt{\dim(\mathcal{Q}^d_{\gamma_i})}\cdot \bra{\Gamma_{i}}\bra{d} \texttt{q}_{\gamma_i}\!\left(\widehat{U}^\dag U\right) \otimes (\widehat{U}^\dag U) \ket{\Gamma_{i}}\ket{d} \right|^2 \nonumber\\
&=\int \textup{d}\widehat{U} \left|\sum_{i=0}^L f_i \sqrt{\dim(\mathcal{Q}^d_{\gamma_i})}\cdot \left( \alpha_i \bra{\Gamma_i^+} \texttt{q}_{\gamma_i^+}\!\left(\widehat{U}^\dag U\right) \ket{\Gamma_i^+}+\beta_i \bra{\Gamma_{i+1}^+} \texttt{q}_{\gamma_{i+1}^+}\!\left(\widehat{U}^\dag U\right) \ket{\Gamma_{i+1}^+} \right)\right|^2\label{eq-831607}\\
\begin{split}
&=\int \textup{d}\widehat{U} \Bigg| f_0\alpha_0\sqrt{\dim(\mathcal{Q}^d_{\gamma_0})}\bra{\Gamma_0^+}\texttt{q}_{\gamma_0^+}\!\left(\widehat{U}^\dag U\right)\ket{\Gamma_0^+} \\ 
&\qquad\qquad\quad+\sum_{i=1}^{L}\bigg(f_i\alpha_i\sqrt{\dim(\mathcal{Q}^d_{\gamma_i})}+f_{i-1}\beta_{i-1}\sqrt{\dim(\mathcal{Q}^d_{\gamma_{i-1}})}\bigg)\bra{\Gamma_i^+}\texttt{q}_{\gamma_i^+}\!\left(\widehat{U}^\dag U\right)\ket{\Gamma_i^+} \Bigg|^2 \label{eq-841816}
\end{split}\\
&=f_0^2\alpha_0^2\frac{\dim(\mathcal{Q}^d_{\gamma_0})}{\dim(\mathcal{Q}^d_{\gamma_0^+})}+ \sum_{i=1}^{L} \left(f_i\alpha_i\sqrt{\frac{\dim(\mathcal{Q}^d_{\gamma_i})}{\dim(\mathcal{Q}^d_{\gamma_i^+})}} + f_{i-1}\beta_{i-1}\sqrt{\frac{\dim(\mathcal{Q}^d_{\gamma_{i-1}})}{\dim(\mathcal{Q}^d_{\gamma_i^+})}}\right)^2,\label{eq-831606}
\end{align}
where in \cref{eq-831607} we use \cref{eq-831608}, in \cref{eq-841816} we use the fact that $\beta_L=0$, \cref{eq-831606} is by using the fact
\begin{align}
\int \textup{d}\widehat{U}\,\, q_{\gamma_i^+}\!\left(\widehat{U}^\dag U\right)^\dag\ketbra{\Gamma_i^+}{\Gamma_j^+}q_{\gamma_j^+}\!\left(\widehat{U}^\dag U\right)&= \int \textup{d}\widehat{U}\,\, q_{\gamma_i^+}\!\left(\widehat{U}\right)^\dag\ketbra{\Gamma_i^+}{\Gamma_j^+}q_{\gamma_j^+}\!\left(\widehat{U}\right) \nonumber\\
&=\mathbbm{1}_{i=j} \frac{1}{\dim(\mathcal{Q}^d_{\gamma_i^+})} \cdot I_{\mathcal{Q}^d_{\gamma_i^+}},\nonumber
\end{align}
due to the invariance of Haar measure and Schur's lemma. 
Then, from \cref{fact-872212} and \cref{eq-831606}, we can see that
\[\E_{\widehat{U}}\!\left[\left|\bra{d}\widehat{U}^\dag U\ket{d} \right|^2\right]=f_0^2x_0^2+\sum_{i=1}^L (f_{i}x_i+f_{i-1}y_{i})^2.\]
\end{proof}

\subsection{Instantiating \texorpdfstring{$f_i$}{fi}}\label{sec-10251701}
In this section, we will give an explicit form of $f_i$ such that the expectation of fidelity (see \cref{eq-872234}) is as large as possible.
For convenience, we discard the constraint $\sum_{i=0}^L f_i^2=1$ and consider the homogeneous form of the RHS of \cref{eq-872234}:
\begin{equation}\label{eq-831743}
\frac{\sum_{i=0}^L(f_i x_i+f_{i-1}y_i)^2}{\sum_{i=0}^L f_i^2},
\end{equation}
where we set $f_{-1}=0$.
Therefore, we need to find a sequence of real numbers $\{f_i\}_{i=-1}^L$ such that $f_{-1}=0$ and \cref{eq-831743} is as large as possible, or equivalently, $1-\cref{eq-831743}$ is as small as possible.

For $-1\leq i\leq L$, we set
\begin{equation}\label{eq-882258}
f_i
=g_i\cdot \sqrt{(L+N+d-2i-1)\prod_{j=2}^{d-1}(N+j-i-1)\prod_{j=2}^{d-1}(L+d-j-i)},
\end{equation}
where $g_{-1}=0$ and for $0\leq i\leq L$,
\begin{equation}\label{eq-882338}
g_i= (i+1)(L+N+d-i).
\end{equation}

Then, we have the following results, where the proofs are based on repeatedly using \cref{fact-882356} and are deferred.
\begin{lemma}\label{lemma-882333}
Using the instantiation of $f_i$ in \cref{eq-882258}, we can show that $1 - \cref{eq-831743}$ equals
\begin{equation}\label{eq-880122}
\frac{\sum_{i=0}^L \frac{(g_i-g_{i-1})^2}{L+N+d-2i} \prod_{j=1}^{d-1}(N+j-i)\prod_{j=1}^{d-1}(L+j-i)}{\sum_{i=0}^{L}\frac{g_i^2-g_{i-1}^2}{d-1}\prod_{j=1}^{d-1}(N+j-i)\prod_{j=1}^{d-1}(L+j-i)}.
\end{equation}
\end{lemma}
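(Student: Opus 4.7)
Write $f_i^2 = g_i^2 h_i$ where $h_i := (L+N+d-2i-1)\prod_{j=2}^{d-1}(N+j-i-1)\prod_{j=2}^{d-1}(L+d-j-i)$, and set $P_i := \prod_{j=1}^{d-1}(N+j-i)(L+j-i)$. The plan is to verify two identities: that $\sum_i[f_i^2-(f_ix_i+f_{i-1}y_i)^2]$ matches the numerator of \cref{eq-880122}, and that $\sum_i f_i^2$ matches its denominator.

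The pivotal step is the algebraic identity
\[
\frac{P_i-P_{i+1}}{d-1}\;=\;h_i\;=\;\frac{P_i}{L+d-1-i}+\frac{P_{i+1}}{N-i}.
\]
To prove it, I factor out the ``middle product'' $R_i := \prod_{j=1}^{d-2}(N+j-i)(L+j-i)$, giving $P_i = (N+d-1-i)(L+d-1-i)R_i$ and $P_{i+1} = (N-i)(L-i)R_i$. A short computation then shows that both $(P_i-P_{i+1})/(d-1)$ and $P_i/(L+d-1-i)+P_{i+1}/(N-i)$ equal $R_i(L+N+d-1-2i)$, and this also matches $h_i$ after the reindexings $\prod_{j=2}^{d-1}(N+j-i-1)=\prod_{k=1}^{d-2}(N+k-i)$ and $\prod_{j=2}^{d-1}(L+d-j-i)=\prod_{k=1}^{d-2}(L+k-i)$.

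Next, I expand $(f_ix_i+f_{i-1}y_i)^2$ using \cref{eq-882258} and \cref{eq-10270121}. After cancelling common radicals, each of $f_ix_i$ and $f_{i-1}y_i$ turns out to be $\sqrt{P_i/(L+N+d-2i)}$ times a simple rational prefactor; the cross term simplifies to $2g_ig_{i-1}P_i/(L+N+d-2i)$ because $\sqrt{(N-i+1)/(L+d-1-i)}\cdot\sqrt{(L+d-1-i)/(N-i+1)}=1$. The net result is
\[
(f_ix_i+f_{i-1}y_i)^2=\frac{P_i}{L+N+d-2i}\!\left[\frac{g_i^2(N-i+1)}{L+d-1-i}+2g_ig_{i-1}+\frac{g_{i-1}^2(L+d-1-i)}{N-i+1}\right].
\]
Adding $(g_i-g_{i-1})^2P_i/(L+N+d-2i)$ to both sides and using $(N-i+1)+(L+d-1-i)=L+N+d-2i$ collapses the right-hand side to $g_i^2P_i/(L+d-1-i)+g_{i-1}^2P_i/(N-i+1)$.

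Finally, summing this identity over $0\le i\le L$, shifting $i\mapsto i+1$ on the $g_{i-1}^2$ term (with $g_{-1}=0$ killing the new boundary), and applying the pivotal identity to combine $g_i^2P_i/(L+d-1-i)+g_i^2P_{i+1}/(N-i)$ into $g_i^2h_i=f_i^2$ for $0\le i\le L-1$, I obtain the numerator identity. The leftover $i=L$ contribution $g_L^2P_L/(d-1)$ equals $g_L^2h_L=f_L^2$ by a direct check using $\prod_{j=2}^{d-1}(L+d-j-L)=(d-2)!$ and $\prod_{j=1}^{d-1}j=(d-1)!$. The denominator identity $\sum_i f_i^2=\sum_i(g_i^2-g_{i-1}^2)P_i/(d-1)$ then follows from the essentially identical Abel summation applied to $\sum_i g_i^2h_i=\sum_i g_i^2(P_i-P_{i+1})/(d-1)$, with the same boundary check at $i=L$.

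\textbf{Main obstacle.} The hard part is discovering the factorization $P_i=(N+d-1-i)(L+d-1-i)R_i$ and $P_{i+1}=(N-i)(L-i)R_i$ that renders the pivotal identity manifest; once this is in hand, the expansion of $(f_ix_i+f_{i-1}y_i)^2$ and the Abel summation are mechanical, requiring only careful accounting of the boundary contributions at $i=-1$ and $i=L$.
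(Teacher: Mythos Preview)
Your proof is correct. Both your argument and the paper's rest on the same telescoping identity---your first pivotal equality $(P_i-P_{i+1})/(d-1)=h_i$ is exactly the paper's \cref{fact-882356} with $a=N-i$, $b=L-i$, $k=d-2$---but the organization differs. The paper rewrites $(N-i+1)g_i+(L+d-i-1)g_{i-1}=(L+N+d-2i)g_i-(L+d-i-1)(g_i-g_{i-1})$, expands the square into three pieces, telescopes the numerator and denominator separately via \cref{fact-882356}, and then subtracts. You instead isolate the additional equality $h_i=P_i/(L+d-1-i)+P_{i+1}/(N-i)$ and use it to show directly that $(f_ix_i+f_{i-1}y_i)^2+\frac{(g_i-g_{i-1})^2P_i}{L+N+d-2i}$ telescopes (after the index shift) to $\sum f_i^2$, which yields the numerator identity in one stroke. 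This is a genuinely cleaner route: the two-sided pivotal identity packages the algebra so that neither the awkward three-term expansion nor the separate subtraction step is needed. One small remark on exposition: the boundary check for the denominator is not ``the same'' as for the numerator---there you need $P_{L+1}=0$ (the factor $j-1$ at $j=1$ vanishes), not $P_L/(d-1)=h_L$---but this is even easier and does not affect correctness.
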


\begin{lemma}\label{lemma-882357}
Using the instantiation of $g_i$ in \cref{eq-882338}, we can show that \cref{eq-880122} equals 
\begin{equation}\label{eq-10252149}
\frac{d-1}{L+N+d+\frac{2}{d+1}NL}.
\end{equation}
\end{lemma}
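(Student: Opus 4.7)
The plan is to reduce \cref{eq-880122} to a ratio of two telescoping sums, each closed in terms of a single product that appears naturally when shifting $d$ to $d+1$.

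First, a direct expansion of $g_i = (i+1)(L+N+d-i)$ (with $g_{-1}=0$) gives
\[ g_i - g_{i-1} = L+N+d-2i, \qquad g_i + g_{i-1} = (L+N+d) + 2i(L+N+d-i). \]
Setting $M := L+N+d$ and $P(i) := \prod_{j=1}^{d-1}(N+j-i)(L+j-i)$, and using $g_i^2 - g_{i-1}^2 = (g_i-g_{i-1})(g_i+g_{i-1})$, the numerator of \cref{eq-880122} is $S := \sum_{i=0}^L (M-2i) P(i)$ and its denominator is $\tfrac{1}{d-1}[M\,S + 2T]$, where $T := \sum_{i=0}^L i(M-i)(M-2i)P(i)$. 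Hence \cref{eq-880122} equals $(d-1)/(M + 2T/S)$, and it suffices to show $T/S = NL/(d+1)$.

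The key observation, which I expect to be (essentially) the content of \cref{fact-882356}, is the telescoping identity
\[ d(M-2i)\,P(i) \;=\; P^{+}(i) - P^{+}(i+1), \]
where $P^{+}(i) := \prod_{j=1}^{d}(N+j-i)(L+j-i)$ is the analog of $P$ with $d$ replaced by $d+1$. This follows from $P^{+}(i) = (N+d-i)(L+d-i)P(i)$, $P^{+}(i+1)=(N-i)(L-i)P(i)$, and $(N+d-i)(L+d-i)-(N-i)(L-i) = d(M-2i)$. Telescoping immediately yields $S = \tfrac{1}{d}[P^{+}(0)-P^{+}(L+1)] = \tfrac{1}{d}P^{+}(0)$, since $P^{+}(L+1)$ contains the vanishing factor $(L+1-L-1)=0$ at $j=1$.

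For $T$, I would substitute the telescoping identity and then perform one Abel summation with $a_i := i(M-i)$, whose forward difference is $a_i - a_{i-1} = M-2i+1$. Both boundary terms vanish ($a_0=0$ and $P^{+}(L+1)=0$), leaving $T = \tfrac{1}{d}\sum_{i=1}^L(M+1-2i)\,P^{+}(i)$. Now apply the same telescoping rule once more, at level $d+1$: with $M+1$ playing the role of $M$ in the shifted identity, we obtain $\sum_{i=0}^L(M+1-2i)P^{+}(i) = \tfrac{1}{d+1}P^{++}(0) = \tfrac{(N+d+1)(L+d+1)}{d+1}P^{+}(0)$, where $P^{++}$ adjoins the further factor $(N+d+1)(L+d+1)$ to $P^{+}$. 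Subtracting the $i=0$ term $(M+1)P^{+}(0)$ and using the elementary identity $(N+d+1)(L+d+1)-(d+1)(L+N+d+1) = NL$ gives $T = \tfrac{NL}{d(d+1)}P^{+}(0)$, so $T/S = NL/(d+1)$ and \cref{eq-880122} equals $(d-1)/(L+N+d + 2NL/(d+1))$, as claimed.

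The main obstacle is keeping the index book-keeping consistent when promoting the telescoping identity from level $d$ to level $d+1$; the algebra itself is elementary once one recognizes that the same rule can be applied twice---once directly to $S$, and once (after a single Abel summation) to a modified sum for $T$.
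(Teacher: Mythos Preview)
Your proposal is correct and follows essentially the same route as the paper: both proofs compute $g_i-g_{i-1}=L+N+d-2i$, evaluate the numerator $S$ by one application of the telescoping identity of \cref{fact-882356} at ``level $d$'', then handle the denominator by an Abel/summation-by-parts step (the paper uses the weight $g_i+g_{i-1}$ directly, you split it as $M+2i(M-i)$, but the resulting difference $g_i-g_{i-2}=2(M+1-2i)$ is exactly your $2(a_i-a_{i-1})$), followed by a second application of the same telescoping identity at ``level $d+1$''. The only distinction is cosmetic bookkeeping.
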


Then we can prove our \cref{thm-8100202}.
\begin{proof}[Proof of \cref{thm-8100202}]
By \cref{lemma-882357} and noting that $N=(d+1)L$, $L=\frac{n}{2d}$, we can see that the expectation of infidelity is
\begin{align}
\cref{eq-10252149}&=O\left(\frac{d}{L(L+d)}\right)=O\left(\frac{d^3}{n(n+d^2)}\right).\nonumber
\end{align}
\end{proof}

\section{Deferred lemmas and proofs}
\subsection{Some facts}
\begin{fact}\label{lemma-10201504}
Suppose $\ket{\psi}$, $\ket{\varphi}$ are two pure states on a $d$-dimensional Hilbert space $\mathcal{H}$, and $\Pi$ is an orthogonal projector onto a subspace of $\mathcal{H}$. Then we have
\[\left|\sqrt{\bra{\psi}\Pi\ket{\psi}}-\sqrt{\bra{\varphi}\Pi\ket{\varphi}}\right|\leq \frac{1}{\sqrt{2}}\|\ketbra{\psi}{\psi}-\ketbra{\varphi}{\varphi}\|_1,\]
where $\|\cdot \|_1$ is the trace norm.
\end{fact}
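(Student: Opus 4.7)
The plan is to reduce the fidelity-type quantity on the left to a $2$-norm distance between representatives of the pure states, and then relate that $2$-norm distance to the trace norm on the right. First I would rewrite $\sqrt{\bra{\psi}\Pi\ket{\psi}} = \norm{\Pi\ket{\psi}}_2$ and similarly for $\ket{\varphi}$, so the LHS becomes $\abs{\norm{\Pi\ket{\psi}}_2-\norm{\Pi\ket{\varphi}}_2}$. By the triangle inequality for the vector $2$-norm, this is at most $\norm{\Pi(\ket{\psi}-\ket{\varphi})}_2 \leq \norm{\ket{\psi}-\ket{\varphi}}_2$, since $\Pi$ is a contraction.

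Next I would exploit the phase freedom in the representatives $\ket{\psi},\ket{\varphi}$: the LHS depends only on $\ketbra{\psi}{\psi}$ and $\ketbra{\varphi}{\varphi}$, and the RHS is similarly invariant, so I may assume without loss of generality that $\braket{\psi}{\varphi}$ is real and non-negative. Under this assumption I compute
\[
\norm{\ket{\psi}-\ket{\varphi}}_2^2 = 2-2\braket{\psi}{\varphi},
\]
and use the standard pure-state identity
\[
\norm{\ketbra{\psi}{\psi}-\ketbra{\varphi}{\varphi}}_1^2 = 4\bigl(1-\abs{\braket{\psi}{\varphi}}^2\bigr) = 4(1-\braket{\psi}{\varphi})(1+\braket{\psi}{\varphi}).
\]

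To finish, I would show $\norm{\ket{\psi}-\ket{\varphi}}_2 \leq \tfrac{1}{\sqrt{2}}\norm{\ketbra{\psi}{\psi}-\ketbra{\varphi}{\varphi}}_1$. Squaring, this reduces to $2-2\braket{\psi}{\varphi} \leq 2(1-\braket{\psi}{\varphi})(1+\braket{\psi}{\varphi})$, equivalently $1 \leq 1+\braket{\psi}{\varphi}$, which holds exactly because our phase alignment forces $\braket{\psi}{\varphi}\geq 0$. Chaining the two inequalities gives the claim.

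I do not expect any real obstacle here; this is a standard inequality between the pure-state analogues of the Bures angle and the trace distance. The only subtlety worth highlighting in the writeup is the phase-alignment step, since without it the bound $\norm{\ket{\psi}-\ket{\varphi}}_2 \le \tfrac{1}{\sqrt{2}}\norm{\ketbra{\psi}{\psi}-\ketbra{\varphi}{\varphi}}_1$ can fail (e.g.\ $\ket{\varphi}=-\ket{\psi}$ makes the LHS equal $2$ while the RHS is $0$). Everything else is the triangle inequality and a direct computation.
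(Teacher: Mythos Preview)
Your proposal is correct and follows essentially the same approach as the paper's proof: both rewrite the left-hand side as a difference of $2$-norms, use the triangle inequality and contractivity of $\Pi$, perform a phase alignment (you via ``WLOG $\braket{\psi}{\varphi}\ge 0$'', the paper via an explicit factor $e^{i\theta}$ with $\braket{\varphi}{\psi}=re^{i\theta}$), and then compare $2-2|\braket{\psi}{\varphi}|$ with $2-2|\braket{\psi}{\varphi}|^2$. The only cosmetic difference is that the paper bounds $\sqrt{2-2r}\le\sqrt{2-2r^2}$ directly from $r\le 1$, whereas you factor $1-r^2=(1-r)(1+r)$ and use $1+r\ge 1$; these are the same inequality.
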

\begin{proof}
Let $r e^{i\theta}=\braket{\varphi}{\psi}$. 
Then, we have
\begin{align}
\left|\sqrt{\bra{\psi}\Pi\ket{\psi}}-\sqrt{\bra{\varphi}\Pi\ket{\varphi}}\right|&=\Big|\left\|\Pi\ket{\psi}\right\|-\left\|e^{i\theta}\Pi\ket{\varphi}\right\|\Big|\nonumber \\
&\leq \left\|\Pi\left(\ket{\psi}-e^{i\theta}\ket{\varphi}\right)\right\|\nonumber \\
&\leq \|\ket{\psi}-e^{i\theta}\ket{\varphi}\| \nonumber \\
&=\sqrt{2-2\Re(e^{-i\theta}\braket{\varphi}{\psi})} \nonumber \\
&=\sqrt{2-2|\braket{\varphi}{\psi}|} \nonumber \\
&\leq \sqrt{2-2|\braket{\varphi}{\psi}|^2} \nonumber \\
&=\frac{1}{\sqrt{2}}\|\ketbra{\psi}{\psi}-\ketbra{\varphi}{\varphi}\|_1,\nonumber
\end{align}
where $\|\ket{\psi}\|=\sqrt{\braket{\psi}{\psi}}$ is the vector $2$-norm.
\end{proof}

\begin{fact}\label{fact-872212}
For $0\leq i\leq L$, we have
\[\alpha_i\sqrt{\frac{\dim(\mathcal{Q}^d_{\gamma_i})}{\dim(\mathcal{Q}^d_{\gamma_i^+})}}=\sqrt{\frac{(N+d-i-1)(N-i+1)}{(L+N+d-2i-1)(L+N+d-2i)}},\]
and for $1\leq i\leq L$, we have
\[\beta_{i-1}\sqrt{\frac{\dim(\mathcal{Q}^d_{\gamma_{i-1}})}{\dim(\mathcal{Q}^d_{\gamma_{i}^+})}}=\sqrt{\frac{(L-i+1)(L+d-i-1)}{(L+N+d-2i+1)(L+N+d-2i)}}.\]
\end{fact}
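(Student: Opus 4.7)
The plan is to read the shapes $\gamma_i$ and $\gamma_i^+$ off of Figure~\ref{fig-880350} and then apply the Weyl dimension formula~\cref{eq-10271256}; because the two shapes differ by only one box, the resulting ratio collapses to a single product which further telescopes. First I would identify
\[
\gamma_i=(L+N-i,\underbrace{L,\ldots,L}_{d-2\text{ entries}},i),\qquad \gamma_i^+=(L+N-i+1,L,\ldots,L,i),
\]
so that $\gamma_i^+$ is $\gamma_i$ with one box adjoined in row $1$. For the second half of the Fact, observe that $\gamma_{i-1}=(L+N-i+1,L,\ldots,L,i-1)$, so the same shape $\gamma_i^+$ can instead be viewed as $\gamma_{i-1}$ with one box adjoined in row $d$. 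These two descriptions match the two summands appearing in the Clebsch--Gordan decomposition~\cref{eq-831608}.

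Next I would invoke the standard fact that whenever $\mu$ is obtained from $\lambda$ by adding a single box in row $k$, the ratio $\dim(\mathcal{Q}^d_\lambda)/\dim(\mathcal{Q}^d_\mu)$ collapses in~\cref{eq-10271256} to
\[
\prod_{j\neq k}\frac{\lambda_k-k-\lambda_j+j}{\lambda_k+1-k-\lambda_j+j},
\]
since every factor not involving the index $k$ cancels. In our setting rows $2,\ldots,d-1$ all share the common value $L$, so the portion of the product over $j\in\{2,\ldots,d-1\}$ further telescopes. Applying this with $k=1$ for $\gamma_i\to\gamma_i^+$ and with $k=d$ for $\gamma_{i-1}\to\gamma_i^+$, one obtains
\[
\frac{\dim(\mathcal{Q}^d_{\gamma_i})}{\dim(\mathcal{Q}^d_{\gamma_i^+})}=\frac{(L+N+d-2i-1)(N-i+1)}{(L+N+d-2i)(N+d-i-1)},
\]
\[
\frac{\dim(\mathcal{Q}^d_{\gamma_{i-1}})}{\dim(\mathcal{Q}^d_{\gamma_i^+})}=\frac{(L+N+d-2i+1)(L+d-i-1)}{(L+N+d-2i)(L-i+1)}.
\]

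Finally I would plug in the explicit expressions $\alpha_i=(N+d-i-1)/(L+N+d-2i-1)$ and $\beta_{i-1}=(L-i+1)/(L+N+d-2i+1)$ from~\cref{eq-872206}: in each identity one factor in the numerator of $\alpha_i^2$ (respectively $\beta_{i-1}^2$) cancels against a matching factor in the denominator of the dimension ratio, leaving exactly the expression inside the square root on the RHS. I do not expect any genuine obstacle here---the representation-theoretic content has already been done in the paper's Clebsch--Gordan computation~\cref{eq-831608}, and everything that remains is careful bookkeeping of the Weyl telescoping together with the $\alpha_i,\beta_{i-1}$ cancellations.
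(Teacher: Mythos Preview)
Your proposal is correct and follows essentially the same route as the paper's proof: both compute the two dimension ratios directly from the Weyl formula~\cref{eq-10271256} (using that $\gamma_i^+$ differs from $\gamma_i$, respectively $\gamma_{i-1}$, by one box in row $1$, respectively row $d$), telescope the middle $j\in\{2,\ldots,d-1\}$ factors, and then multiply by $\alpha_i^2$ or $\beta_{i-1}^2$ from~\cref{eq-872206}. Your explicit identification of the shapes $\gamma_i=(L+N-i,L,\ldots,L,i)$ and the single-box ratio formula make the bookkeeping a bit more transparent, but there is no substantive difference from the paper's argument.
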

\begin{proof}
By \cref{eq-10271256}, we have
\begin{align}
\frac{\dim(\mathcal{Q}^d_{\gamma_i})}{\dim(\mathcal{Q}^d_{\gamma_i^+})}&=\frac{(L+N+d-2i-1)\cdot \prod_{2\leq j\leq d-1}(N+j-i-1)}{(L+N+d-2i)\cdot \prod_{2\leq j\leq d-1}(N+j-i)}\nonumber\\
&=\frac{(L+N+d-2i-1)(N-i+1)}{(L+N+d-2i)(N+d-i-1)}.\nonumber
\end{align}
\begin{align}
\frac{\dim(\mathcal{Q}^d_{\gamma_{i-1}})}{\dim(\mathcal{Q}^d_{\gamma_{i}^+})}&=\frac{(L+N+d-2i+1)\prod_{2\leq j\leq d-1}(L+d-i-j+1)}{(L+N+d-2i)\prod_{2\leq j\leq d-1}(L+d-i-j)}\nonumber\\
&=\frac{(L+N+d-2i+1)(L+d-i-1)}{(L+N+d-2i)(L-i+1)}.\nonumber
\end{align}
Then, combined with \cref{eq-872206}, we can see
\[\alpha_i\sqrt{\frac{\dim(\mathcal{Q}^d_{\gamma_i})}{\dim(\mathcal{Q}^d_{\gamma_i^+})}}=\sqrt{\frac{(N+d-i-1)(N-i+1)}{(L+N+d-2i-1)(L+N+d-2i)}},\]
and
\[\beta_{i-1}\sqrt{\frac{\dim(\mathcal{Q}^d_{\gamma_{i-1}})}{\dim(\mathcal{Q}^d_{\gamma_{i}^+})}}=\sqrt{\frac{(L-i+1)(L+d-i-1)}{(L+N+d-2i+1)(L+N+d-2i)}}.\]
\end{proof}

\begin{fact}\label{fact-882356}
For any integer $k\geq 0$ and number $a,b$, we have
\[(a+b+k+1)\prod_{j=1}^k (a+j) (b+j) = \frac{1}{k+1}\left(\prod_{j=1}^{k+1}(a+j)(b+j)-\prod_{j=1}^{k+1} (a-1+j)(b-1+j)\right).\]
\end{fact}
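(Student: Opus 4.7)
The plan is to prove this identity by direct algebraic manipulation, recognizing the right-hand side as a telescoping-style difference of two products that share a large common factor. The prefactor $\frac{1}{k+1}$ is a strong hint that the bracketed expression will factor as $(k+1)$ times something, so the core task is to extract this factor cleanly.

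First, I would reindex the second product on the right-hand side: substituting $j' = j-1$ gives
\[\prod_{j=1}^{k+1}(a-1+j)(b-1+j) = \prod_{j'=0}^{k}(a+j')(b+j') = ab \cdot \prod_{j=1}^{k}(a+j)(b+j).\]
Similarly, peeling off the top factor of the first product,
\[\prod_{j=1}^{k+1}(a+j)(b+j) = (a+k+1)(b+k+1) \cdot \prod_{j=1}^{k}(a+j)(b+j).\]

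Next, I would subtract these two expressions and factor out the common piece $\prod_{j=1}^{k}(a+j)(b+j)$, leaving the scalar factor $(a+k+1)(b+k+1) - ab$. A direct expansion gives
\[(a+k+1)(b+k+1) - ab = (k+1)a + (k+1)b + (k+1)^2 = (k+1)(a+b+k+1).\]

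Finally, combining these observations, the right-hand side becomes
\[\frac{1}{k+1}\cdot (k+1)(a+b+k+1)\prod_{j=1}^{k}(a+j)(b+j) = (a+b+k+1)\prod_{j=1}^{k}(a+j)(b+j),\]
which is exactly the left-hand side. There is no real obstacle here; once the shared factor $\prod_{j=1}^{k}(a+j)(b+j)$ is recognized, the identity reduces to the elementary observation that $(a+k+1)(b+k+1) - ab = (k+1)(a+b+k+1)$, and the $(k+1)$ on the right cancels the prefactor $\frac{1}{k+1}$ precisely.
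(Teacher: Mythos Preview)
Your proof is correct and follows essentially the same approach as the paper: reindex the second product to $\prod_{j=0}^{k}(a+j)(b+j)$, factor out the common block $\prod_{j=1}^{k}(a+j)(b+j)$, and reduce to the identity $(a+k+1)(b+k+1)-ab=(k+1)(a+b+k+1)$.
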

\begin{proof}
\begin{align}
\prod_{j=1}^{k+1}(a+j)(b+j)-\prod_{j=1}^{k+1} (a-1+j)(b-1+j)&=\prod_{j=1}^{k+1}(a+j)(b+j)-\prod_{j=0}^k (a+j)(b+j)\nonumber\\
&=((a+k+1)(b+k+1)-ab)\prod_{j=1}^k (a+j)(b+j)\nonumber\\
&=(k+1)(a+b+k+1)\prod_{j=1}^k(a+j)(b+j)\nonumber
\end{align}
\end{proof}

\subsection{Proof of \texorpdfstring{\cref{lemma-882333}}{Lemma 4.2}}
\begin{proof}[Proof of \cref{lemma-882333}]
From \cref{eq-10270121} and \cref{eq-882258}, and by direct calculation,
\[f_ix_i=g_i\cdot (N-i+1) \sqrt{\frac{\prod_{j=2}^{d-1}(N+j-i) \prod_{j=2}^{d-1}(L+d-j-i)}{L+N+d-2i}},\]
\[f_{i-1}y_i=g_{i-1}\cdot  (L+d-i-1) \sqrt{\frac{\prod_{j=2}^{d-1}(N+j-i) \prod_{j=2}^{d-1}(L+d-j-i)}{L+N+d-2i}}.\]

Therefore, \cref{eq-831743} can be written as
\begin{equation}\label{eq-871520}
\frac{\sum_{i=0}^L\frac{((N-i+1)g_i+(L+d-i-1)g_{i-1})^2}{L+N+d-2i} \prod_{j=2}^{d-1}(N+j-i) \prod_{j=2}^{d-1}(L+d-j-i)}{\sum_{i=0}^L g_i^2(L+N+d-2i-1)\prod_{j=2}^{d-1}(N+j-i-1)\prod_{j=2}^{d-1}(L+d-j-i)}.
\end{equation}
Note that the denominator of \cref{eq-871520} equals
\begin{align}
&\sum_{i=0}^L g_i^2(L+N+d-2i-1)\prod_{j=1}^{d-2}(N+j-i)\prod_{j=1}^{d-2}(L+j-i)\nonumber\\
=& \sum_{i=0}^L \frac{g_i^2}{d-1}\left(\prod_{j=1}^{d-1}(N+j-i)(L+j-i)-\prod_{j=1}^{d-1}(N+j-i-1)(L+j-i-1)\right)\label{eq-890029}\\
=&\sum_{i=0}^{L}\frac{g_i^2-g_{i-1}^2}{d-1}\prod_{j=1}^{d-1}(N+j-i)(L+j-i) - \frac{g_L^2}{d-1}\prod_{j=1}^{d-1}(N-L+j-1)(j-1)\nonumber\\
=&\sum_{i=0}^{L}\frac{g_i^2-g_{i-1}^2}{d-1}\prod_{j=1}^{d-1}(N+j-i)(L+j-i),\label{eq-880020}
\end{align}
where \cref{eq-890029} is by using \cref{fact-882356}.
Also note that the numerator of \cref{eq-871520} equals
\begin{subequations}
\begin{align}
&\sum_{i=0}^L\frac{((N-i+1)g_i+(L+d-i-1)g_{i-1})^2}{L+N+d-2i} \prod_{j=1}^{d-2}(N+j-i+1) \prod_{j=1}^{d-2}(L+j-i)\nonumber\\
=&\sum_{i=0}^L\frac{\left((L+N+d-2i)g_i-(L+d-i-1)(g_{i}-g_{i-1})\right)^2}{L+N+d-2i} \prod_{j=1}^{d-2}(N+j-i+1) \prod_{j=1}^{d-2}(L+j-i)\nonumber\\
=&\sum_{i=0}^L\left((L+N+d-2i)g_i^2-2(L+d-i-1)g_i(g_i-g_{i-1})+\frac{(L+d-i-1)^2(g_{i}-g_{i-1})^2}{L+N+d-2i}\right) \nonumber\\
&\qquad\qquad\qquad\qquad\qquad\qquad\qquad\qquad\qquad\qquad\qquad\qquad\qquad\qquad\cdot\prod_{j=1}^{d-2}(N+j-i+1) \prod_{j=1}^{d-2}(L+j-i)\nonumber\\
=&\sum_{i=0}^L (L+N+d-2i)g_i^2\prod_{j=1}^{d-2}(N+j-i+1) \prod_{j=1}^{d-2}(L+j-i) \label{eq-880004}\\
&+\sum_{i=0}^L\left(-2(L+d-i-1)g_i(g_i-g_{i-1})+\frac{(L+d-i-1)^2(g_{i}-g_{i-1})^2}{L+N+d-2i}\right) \prod_{j=1}^{d-2}(N+j-i+1) \prod_{j=1}^{d-2}(L+j-i). \label{eq-880005}
\end{align}
\end{subequations}
The part in \cref{eq-880004} equals
\begin{align}
&\sum_{i=0}^L (L+N+d-2i)g_i^2\prod_{j=1}^{d-2}(N+j-i+1) \prod_{j=1}^{d-2}(L+j-i)\nonumber\\
=&\sum_{i=0}^L \frac{g_i^2}{d-1} \left(\prod_{j=1}^{d-1}(N+j-i+1)(L+j-i)-\prod_{j=1}^{d-1}(N+j-i)(L+j-i-1)\right)\label{eq-890031}\\
=&\sum_{i=0}^L \frac{g_i^2-g_{i-1}^2}{d-1} \prod_{j=1}^{d-1}(N+j-i+1)(L+j-i)-\frac{g_L^2}{d-1}\prod_{j=1}^{d-1}(N-L+j)(j-1)\nonumber\\
=&\sum_{i=0}^L \frac{g_i^2-g_{i-1}^2}{d-1} \prod_{j=1}^{d-1}(N+j-i+1)(L+j-i),\label{eq-880015}
\end{align}
where \cref{eq-890031} is by using \cref{fact-882356}.
Then, the denominator of \cref{eq-871520} minus the numerator of \cref{eq-871520} is therefore $\eqref{eq-880020}-\eqref{eq-880015}-\eqref{eq-880005}$, which equals
\begin{align}
&\left(\sum_{i=0}^L \frac{g_i^2-g_{i-1}^2}{d-1}[(N+1-i)-(N+d-i)]\prod_{j=2}^{d-1}(N+j-i)\prod_{j=1}^{d-1}(L+j-i)\right)-\eqref{eq-880005}\nonumber\\
=&\left(\sum_{i=0}^L (g_{i-1}^2-g_{i}^2)\prod_{j=2}^{d-1}(N+j-i)\prod_{j=1}^{d-1}(L+j-i)\right)-\eqref{eq-880005}\nonumber\\
=&\sum_{i=0}^L \left((g_{i-1}^2-g_i^2)(L+d-i-1)+2(L+d-i-1)g_i(g_i-g_{i-1})-\frac{(L+d-i-1)^2(g_i-g_{i-1})^2}{L+N+d-2i} \right)\nonumber\\
&\qquad\qquad\qquad\qquad\qquad\qquad\qquad\qquad\qquad\qquad\qquad\qquad\qquad\qquad\qquad\qquad\cdot\prod_{j=2}^{d-1}(N+j-i)\prod_{j=1}^{d-2}(L+j-i)\nonumber\\
=&\sum_{i=0}^L \frac{(L+d-i-1)(N-i+1)(g_i-g_{i-1})^2}{L+N+d-2i}\prod_{j=2}^{d-1}(N+j-i)\prod_{j=1}^{d-2}(L+j-i)\nonumber\\
=&\sum_{i=0}^L \frac{(g_i-g_{i-1})^2}{L+N+d-2i} \prod_{j=1}^{d-1}(N+j-i)\prod_{j=1}^{d-1}(L+j-i).\label{eq-880053}
\end{align}
Then, $1- \cref{eq-871520}$ is therefore $\eqref{eq-880053}\div \eqref{eq-880020}$, which equals
\begin{equation*}
\frac{\sum_{i=0}^L \frac{(g_i-g_{i-1})^2}{L+N+d-2i} \prod_{j=1}^{d-1}(N+j-i)\prod_{j=1}^{d-1}(L+j-i)}{\sum_{i=0}^{L}\frac{g_i^2-g_{i-1}^2}{d-1}\prod_{j=1}^{d-1}(N+j-i)\prod_{j=1}^{d-1}(L+j-i)}.
\end{equation*}
\end{proof}

\subsection{Proof of \texorpdfstring{\cref{lemma-882357}}{Lemma 4.3}}
\begin{proof}[Proof of \cref{lemma-882357}]
It is easy to see that for $0\leq i\leq L$
\[g_{i}-g_{i-1}=L+N+d-2i.\]
Thus, \cref{eq-880122} equals
\begin{align}
&\frac{(d-1)\sum_{i=0}^L (L+N+d-2i) \prod_{j=1}^{d-1}(N+j-i)\prod_{j=1}^{d-1}(L+j-i)}{\sum_{i=0}^{L}(g_i+g_{i-1})(L+N+d-2i)\prod_{j=1}^{d-1}(N+j-i)\prod_{j=1}^{d-1}(L+j-i)}.\label{eq-880150}
\end{align}
The numerator of \cref{eq-880150} is
\begin{align}
&(d-1)\sum_{i=0}^L (L+N+d-2i) \prod_{j=1}^{d-1}(N+j-i)\prod_{j=1}^{d-1}(L+j-i)\nonumber\\
=&(d-1)\sum_{i=0}^L \frac{1}{d}\left(\prod_{j=1}^d(N+j-i)(L+j-i)-\prod_{j=1}^d(N+j-i-1)(L+j-i-1)\right) \label{eq-890033}\\
=&\frac{(d-1)}{d}\left(\prod_{j=1}^d(N+j)(L+j)-\prod_{j=1}^d(N-L+j-1)(j-1)\right) \nonumber\\
=&\frac{(d-1)}{d}\prod_{j=1}^d(N+j)(L+j), \label{eq-10271336}
\end{align}
where \cref{eq-890033} is by using \cref{fact-882356}.
The denominator of \cref{eq-880150} is
\begin{align}
&\sum_{i=0}^{L}(g_i+g_{i-1})(L+N+d-2i)\prod_{j=1}^{d-1}(N+j-i)\prod_{j=1}^{d-1}(L+j-i) \nonumber\\
=&\sum_{i=0}^L\frac{g_i+g_{i-1}}{d}\left(\prod_{j=1}^d(N+j-i)(L+j-i)-\prod_{j=1}^d(N+j-i-1)(L+j-i-1)\right)\label{eq-890037}\\
=&\frac{g_0}{d}\prod_{j=1}^d(N+j)(L+j)-\frac{g_L+g_{L-1}}{d}\prod_{j=1}^d(N-L+j-1)(j-1)+\sum_{i=1}^L \frac{g_i-g_{i-2}}{d} \prod_{j=1}^d(N+j-i)(L+j-i) \nonumber\\
=&\frac{g_0}{d}\prod_{j=1}^d(N+j)(L+j)+\sum_{i=1}^L \frac{g_i-g_{i-2}}{d} \prod_{j=1}^d(N+j-i)(L+j-i) \nonumber\\
=&\frac{g_0}{d}\prod_{j=1}^d(N+j)(L+j)+\frac{2}{d}\sum_{i=1}^L (L+N+d-2i+1) \prod_{j=1}^d(N+j-i)(L+j-i)\nonumber\\
=&\frac{g_0}{d}\prod_{j=1}^d(N+j)(L+j)+\frac{2}{d(d+1)}\sum_{i=1}^L \left(\prod_{j=1}^{d+1}(N+j-i)(L+j-i)-\prod_{j=1}^{d+1}(N+j-i-1)(L+j-i-1)\right) \label{eq-890038}\\
=&\frac{g_0}{d}\prod_{j=1}^d(N+j)(L+j) +\frac{2}{d(d+1)}\left(\prod_{j=1}^{d+1}(N+j-1)(L+j-1)-\prod_{j=1}^{d+1}(N-L+j-1)(j-1)\right)\nonumber\\
=&\frac{g_0}{d}\prod_{j=1}^d(N+j)(L+j) +\frac{2}{d(d+1)}\prod_{j=1}^{d+1}(N+j-1)(L+j-1),\label{eq-10271338}
\end{align}
where \cref{eq-890037} and \cref{eq-890038} are by using \cref{fact-882356}.
Therefore, \cref{eq-880150} equals $\eqref{eq-10271336} \div\eqref{eq-10271338}$, which is
\[\frac{d-1}{g_0+\frac{2}{d+1}NL}=\frac{d-1}{L+N+d+\frac{2}{d+1}NL}.\]
\end{proof}

\section*{Acknowledgments}

We thank Ewin Tang and John Wright for valuable discussions.

\bibliographystyle{alpha}
\bibliography{main}

\end{document}